\documentclass[twoside,11pt]{article}

\usepackage{blindtext}

\usepackage[utf8]{inputenc} 
\usepackage[T1]{fontenc}    
\usepackage{url}            
\usepackage{booktabs}       
\usepackage{amsfonts}       
\usepackage{nicefrac}       
\usepackage{microtype}      
\usepackage{xcolor}         
\usepackage{aligned-overset}
\usepackage{amssymb}
\usepackage{mathtools}
\usepackage{microtype, float, wrapfig}  
\usepackage{amsthm,amsmath,amssymb,bbm}
\usepackage{multirow}
\usepackage{algorithm}
\usepackage{algpseudocode}
\usepackage{natbib}
\usepackage{multibib}

\newcites{supp}{References of Appendix}
\usepackage[preprint]{jmlr2e}
 \hypersetup{ hidelinks }

\usepackage{lastpage}

\ShortHeadings{Knowledge Cascade: Reverse Knowledge Distillation}{Fang, Lu, Chen, Zhong and Ma}

\firstpageno{1}

\begin{document}




\renewcommand{\hat}{\widehat}
\newcommand{\bfm}[1]{\ensuremath{\mathbf{#1}}}
\newcommand{\prob}{\text{P}}
\newcommand{\red}{\textcolor{red}}

\newcommand{\blue}{\textcolor{blue}}
\newcommand{\green}{\textcolor{green}}
\newcommand{\sla}{{\scriptscriptstyle\langle}}
\newcommand{\sra}{{\scriptscriptstyle\rangle}}

\def \bsbeta{\bfsym \beta}
\def \bstheta{\bfsym \theta}
\def\ba{\bfm a}   \def\bA{\bfm A}  \def\AA{\mathbb{A}}
\def\bb{\bfm b}   \def\bB{\bfm B}  \def\BB{\mathbb{B}}
\def\bc{\bfm c}   \def\bC{\bfm C}  \def\CC{\mathbb{C}}
\def\bd{\bfm d}   \def\bD{\bfm D}  \def\DD{\mathbb{D}}
\def\be{\bfm e}   \def\bE{\bfm E}  \def\EE{\mathbb{E}}
\def\bff{\bfm f}  \def\bF{\bfm F}  \def\FF{\mathbb{F}}
\def\bg{\bfm g}   \def\bG{\bfm G}  \def\GG{\mathbb{G}}
\def\bh{\bfm h}   \def\bH{\bfm H}  \def\HH{\mathbb{H}}
\def\bi{\bfm i}   \def\bI{\bfm I}  \def\II{\mathbb{I}}
\def\bj{\bfm j}   \def\bJ{\bfm J}  \def\JJ{\mathbb{J}}
\def\bk{\bfm k}   \def\bK{\bfm K}  \def\KK{\mathbb{K}}
\def\bl{\bfm l}   \def\bL{\bfm L}  \def\LL{\mathbb{L}}
\def\bm{\bfm m}   \def\bM{\bfm M}  \def\MM{\mathbb{M}}
\def\bn{\bfm n}   \def\bN{\bfm N}  \def\NN{\mathbb{N}}
\def\bo{\bfm o}   \def\bO{\bfm O}  \def\OO{\mathbb{O}}
\def\bp{\bfm p}   \def\bP{\bfm P}  \def\PP{\mathbb{P}}
\def\bq{\bfm q}   \def\bQ{\bfm Q}  \def\QQ{\mathbb{Q}}
\def\br{\bfm r}   \def\bR{\bfm R}  \def\RR{\mathbb{R}}
\def\bs{\bfm s}   \def\bS{\bfm S}  \def\SS{\mathbb{S}}
\def\bt{\bfm t}   \def\bT{\bfm T}  \def\TT{\mathbb{T}}
\def\bu{\bfm u}   \def\bU{\bfm U}  \def\UU{\mathbb{U}}
\def\bv{\bfm v}   \def\bV{\bfm V}  \def\VV{\mathbb{V}}
\def\bw{\bfm w}   \def\bW{\bfm W}  \def\WW{\mathbb{W}}
\def\bx{\bfm x}   \def\bX{\bfm X}  \def\XX{\mathbb{X}} 
\def\bY{\bfm Y}   \def\YY{\mathbb{Y}}
\def\bz{\bfm z}   \def\bZ{\bfm Z}  \def\ZZ{\mathbb{Z}}

\def\calA{{\cal  A}} \def\cA{{\cal  A}}
\def\calB{{\cal  B}} \def\cB{{\cal  B}}
\def\calC{{\cal  C}} \def\cC{{\cal  C}}
\def\calD{{\cal  D}} \def\cD{{\cal  D}}
\def\calE{{\cal  E}} \def\cE{{\cal  E}}
\def\calF{{\cal  F}} \def\cF{{\cal  F}}
\def\calG{{\cal  G}} \def\cG{{\cal  G}}
\def\calH{{\cal  H}} \def\cH{{\cal  H}}
\def\calI{{\cal  I}} \def\cI{{\cal  I}}
\def\calJ{{\cal  J}} \def\cJ{{\cal  J}}
\def\calK{{\cal  K}} \def\cK{{\cal  K}}
\def\calL{{\cal  L}} \def\cL{{\cal  L}}
\def\calM{{\cal  M}} \def\cM{{\cal  M}}
\def\calN{{\cal  N}} \def\cN{{\cal  N}}
\def\calO{{\cal  O}} \def\cO{{\cal  O}}
\def\calP{{\cal  P}} \def\cP{{\cal  P}}
\def\calQ{{\cal  Q}} \def\cQ{{\cal  Q}}
\def\calR{{\cal  R}} \def\cR{{\cal  R}}
\def\calS{{\cal  S}} \def\cS{{\cal  S}}
\def\calT{{\cal  T}} \def\cT{{\cal  T}}
\def\calU{{\cal  U}} \def\cU{{\cal  U}}
\def\calV{{\cal  V}} \def\cV{{\cal  V}}
\def\calW{{\cal  W}} \def\cW{{\cal  W}}
\def\calX{{\cal  X}} \def\cX{{\cal  X}}
\def\calY{{\cal  Y}} \def\cY{{\cal  Y}}
\def\calZ{{\cal  Z}} \def\cZ{{\cal  Z}}
\def\bzero{\bfm 0}

\def\E{\text{E}}
\def\R{\mathbb{R}}
\def\cov{\mathrm{cov}}
\def\pcov{\mathrm{Pcov}}
\def\pc{\mathrm{PC}}
\def\a\cos{\mathrm{arc\cos}}

\newcommand{\bfsym}[1]{\ensuremath{\boldsymbol{#1}}}
\def\bsw{\bfsym w}
\def\bsu{\bfsym u}
\def\bsd{\bfsym d}
\def\bsc{\bfsym c}
\def\bss{\bfsym s}
\def\bst{\bfsym t}
\def\bsx{\bfsym x}
\def\bsX{\bfsym X}
\def\bsy{\bfsym y}
\def\bsY{\bfsym Y}
\def\bsz{\bfsym z}
\def\bsI{\bfsym I}

\def\balpha{\bfsym \alpha}
\def\bbeta{\bfsym \beta}
\def\bgamma{\bfsym \gamma}             \def\bGamma{\bfsym \Gamma}
\def\bdelta{\bfsym {\delta}}           \def\bDelta {\bfsym {\Delta}}
\def\bfeta{\bfsym {\eta}}              \def\bfEta {\bfsym {\Eta}}
\def\bmu{\bfsym {\mu}}                 \def\bMu {\bfsym {\Mu}}
\def\bnu{\bfsym {\nu}}
\def\bstheta{\bfsym {\theta}}           \def\bsTheta {\bfsym {\Theta}}
\def\beps{\bfsym \varepsilon}          \def\bepsilon{\bfsym \varepsilon}
\def\bsigma{\bfsym \sigma}             \def\bSigma{\bfsym \Sigma}
\def\blambda {\bfsym {\lambda}}        \def\bLambda {\bfsym {\Lambda}}
\def\bomega {\bfsym {\omega}}          \def\bOmega {\bfsym {\Omega}}
\def\brho   {\bfsym {\rho}}
\def\btau{\bfsym {\tau}}
\def\bpsi{\bfsym {\psi}}               \def\bPsi{\bfsym {\Psi}}
\def\bxi{\bfsym {\xi}}
\def\bzeta{\bfsym {\zeta}}
\def \bkappa {\bfsym{\calK}}
\def \bphi {\bfsym{\varphi}}

 \def\halpha{\hat{\alpha}}              \def\hbalpha{\hat{\bfsym \alpha}}
 \def\hbeta{\hat{\beta}}                \def\hbbeta{\hat{\bfsym \beta}}
 \def\hgamma{\hat{\gamma}}              \def\hgamma{\hat{\bfsym \gamma}}
 \def\hGamma{\hat{ \Gamma}}             \def\hbGamma{\hat{\bfsym \Gamma}}
 \def\hdelta{\hat{\delta}}              \def\hbdelta{\hat{\bfsym {\delta}}}
 \def\hDelta {\hat{\Delta}}             \def\hbDelta{\hat{\bfsym {\Delta}}}
 \def\heta{\hat {\eta}}                 \def\hbfeta {\hat{\bfsym {\eta}}}
 \def\hmu{\hat{\mu}}                    \def\hbmu {\hat{\bfsym {\mu}}}
 \def\hnu{\hat{\nu}}                    \def\hbnu {\hat{\bfsym {\nu}}}
 \def\htheta{\hat {\theta}}             \def\hbtheta {\hat{\bfsym {\theta}}}
 \def\hTheta{\hat {\Theta}}             \def\hbTheta {\hat{\bfsym {\Theta}}}
 \def\hbeps{\hat{\bfsym \varepsilon}}   \def\hbepsilon{\hat{\bfsym \varepsilon}}
 \def\hsigma{\hat{\sigma}}              \def\hbsigma{\hat{\bfsym \sigma}}
 \def\hSigma{\hat{\Sigma}}              \def\hbSigma{\hat{\bfsym \Sigma}}
 \def\hlambda{\hat{\lambda}}            \def\hblambda{\hat{\bfsym \lambda}}
 \def\hLambda{\hat{\Lambda}}            \def\hbLambda{\hat{\bfsym \Lambda}}
 \def\homega {\hat {\omega}}            \def\hbomega {\hat{\bfsym {\omega}}}
 \def\hOmega {\hat {\omega}}            \def\hbOmega {\hat{\bfsym {\Omega}}}
 \def\hrho   {\hat {\rho}}              \def\hbrho {\hat{\bfsym {\rho}}}
 \def\htau   {\hat {\tau}}              \def\hbtau {\hat{\bfsym {\tau}}}
 \def\hxi{\hat{\xi}}                    \def\hbxi{\hat{\bfsym {\xi}}}
 \def\hzeta{\hat{\zeta}}                \def\hbzeta{\hat{\bfsym {\bzeta}}}

\def \T{\mathrm{\scriptstyle T}} 
\def\FDP{\mathrm{FDP}}
\def\wt{\widetilde}
\newcommand\independent{\protect\mathpalette{\protect\independenT}{\perp}}
\def\independenT#1#2{\mathrel{\rlap{$#1#2$}\mkern2mu{#1#2}}}

\title{Knowledge Cascade: Reverse Knowledge Distillation on Nonparametric Multivariate Functional Estimation}

\let \footnote \thanks
\author{\name Luyang Fang\footnote{Equal contribution.}  \email luyang.fang@uga.edu \vspace{6pt}\\
       \name Haoran Lu\footnotemark[\value{footnote}] \email haoran.lu@uga.edu \vspace{6pt}\\
        \name Yongkai Chen \email yongkai.chen@uga.edu \vspace{6pt}\\     
        \name Wenxuan Zhong\footnote{Corresponding authors.} \email wenxuan@uga.edu \vspace{6pt}\\
       \name Ping Ma\footnotemark[\value{footnote}] \email pingma@uga.edu \vspace{6pt}\\
       \addr Department of Statistics\\
       University of Georgia, Athens, GA, 30602, USA}

\editor{Ji Zhu}

\maketitle

\begin{abstract}%
As machine learning models and datasets continue to grow, developing complex models has become increasingly computationally demanding. Knowledge distillation reduces deployment cost by compressing a large, well-trained teacher model into a compact student model, but it does not address settings where constructing the teacher itself is the bottleneck. Motivated by this challenge, we introduce Knowledge Cascade (KCas), a reverse knowledge distillation framework that uses information from a small, inexpensive student model to guide the development of a more complex teacher model. Although this direction is counterintuitive because the teacher typically has greater representational capacity, we show that student-to-teacher transfer can be principled when supported by statistical scaling relationships.
We first develop KCas for nonparametric multivariate functional estimation in reproducing kernel Hilbert spaces via smoothing splines, where selecting multiple smoothing parameters is a major computational bottleneck. KCas transfers student-selected smoothing parameters to the full-sample regime through asymptotic scaling laws, substantially reducing computational cost for high-dimensional and large-scale datasets while retaining theoretical guarantees. Beyond smoothing splines, we illustrate the same principle through kernel density estimation and deep learning hyperparameter transfer. Simulations and real-data experiments show that KCas achieves substantial computational savings while maintaining strong statistical performance, and can sometimes outperform the corresponding full-sample procedure.

\end{abstract}

\begin{keywords}
{nonparametric functional estimation, RKHS, deep learning, computationally efficient learning, hyperparameter scaling, convergence }
\end{keywords}

\section{Introduction}


Machine learning models have recently seen a rapid increase in complexity, particularly with the success of large-scale deep neural networks across language, vision, multimodal learning, and generative modeling \citep{hinton2012deep,wolf2020transformers,hoffmann2022training,dehghani2023scaling,openai2023gpt4,grattafiori2024llama}. Despite their effectiveness, developing such models often requires substantial computational resources and human effort, especially in the presence of large data volumes, high dimensionality, and complex data structures \citep{strubell2019energy,hoffmann2022training,fang2025spot}. As models and datasets continue to grow, it becomes increasingly important to develop principled strategies that use inexpensive preliminary models to guide the development of more complex models. 


Traditionally, knowledge distillation (KD) has served as a widely used framework for mitigating computational costs \citep{hinton2015distilling}. 
In the standard KD paradigm, a large and well-trained teacher model transfers information to a smaller student model, allowing the student to retain much of the teacher's predictive performance with lower computational cost \citep{zhang2019your,gou2021knowledge,fang2024bayesian,ma2026generalizable,fang2026statistical,fang2026knowledge}. While this teacher-to-student paradigm has been highly effective for model compression and efficient deployment, it leaves a critical bottleneck unresolved: standard KD relies on the prerequisite of a fully trained, highly capable teacher model, offering no relief for the exorbitant initial cost of developing the teacher itself.

To address this fundamental limitation in the training phase, we ask whether the mechanics of knowledge transfer can be inverted.
Specifically, can a small and computationally efficient student model provide useful information for \emph{training} a larger teacher model? We propose \emph{Knowledge Cascade} (KCas), a reverse knowledge distillation framework in which information learned from a student model is transferred upward to guide the teacher model. In KCas, the student is not intended to replace the teacher; rather, it serves as a low-cost preliminary model that extracts useful information for the teacher at a substantially reduced computational cost. While transferring knowledge upward to a model with strictly greater representational capacity may initially seem counterintuitive, the key observation behind KCas is that the student does not need to learn everything the teacher will eventually learn. It only needs to extract transferable information that remains structurally useful when moving from a simpler setting to a more complex one. This dynamic is analogous to a statistical pilot study, where a small preliminary experiment cannot replace the full study but provides valuable guidance for conducting it more efficiently. KCas exploits this principle by learning  from a low-cost student model and transferring the insights to the teacher through statistical theory or empirical scaling laws.

We formalize this principle first within the context of nonparametric multivariate functional estimation in reproducing kernel Hilbert spaces (RKHS). In this setting, a subsample-based estimator naturally serves as the student and the full-sample estimator as the teacher, since the effective complexity of the estimator grows with sample size \citep{gu2013smoothing}. The main computational bottleneck is selecting multiple smoothing parameters, whose number and search cost increase rapidly with the number of predictors and interaction terms.
KCas addresses this bottleneck by selecting smoothing parameters on a small subsample and transferring them to the full sample through asymptotic sample-size scaling. The teacher model is then fitted on the full data using these transferred parameters, avoiding costly full-sample tuning while retaining the statistical benefit of using all observations. Our theory and experiments show that KCas substantially reduces computation and can sometimes improve estimation accuracy by avoiding unstable or overly adaptive full-sample tuning.

While smoothing spline ANOVA models provide the primary theoretical anchor for this paper, they represent just one concrete instance of a fundamental principle: whenever an appropriate scaling relationship exists,  information learned from a low-cost student model can be transferred to guide a more expensive teacher model. To demonstrate this versatility, we extend the KCas to kernel density estimation, where bandwidths follow classical asymptotic scaling laws, and to deep learning, where hyperparameters selected on compact student networks can be used to guide larger teacher networks. Ultimately, these extensions position KCas not merely as a specialized technique for RKHS, but as a general, theoretically grounded paradigm for reverse knowledge transfer across diverse machine learning domains. 


\vspace{6pt}
\noindent \textbf{Our contributions:}
\begin{enumerate}
    \item We introduce Knowledge Cascade (KCas), a reverse knowledge distillation framework in which small, computationally efficient student models guide larger and more complex teacher models, providing a new perspective beyond conventional teacher-to-student distillation.
    \item We show that student-to-teacher knowledge transfer can be principled when supported by asymptotic scaling laws, using nonparametric multivariate functional estimation theory to extrapolate information from student models to teacher models.
    \item We develop KCas for smoothing spline ANOVA models, design the associated algorithm, and establish consistency theory. Under the proposed subsampling scheme, KCas reduces the computational complexity of smoothing-parameter selection from $O(n^3)$ to $O(n^{\frac{3}{4}})$ while preserving statistical accuracy.
    \item We illustrate the broader applicability of KCas beyond smoothing splines through kernel density estimation and deep learning hyperparameter transfer.
    \item Through extensive simulations and real-data experiments, we show that KCas achieves substantial computational savings and can even outperform full-sample tuning in some settings.
\end{enumerate}




\section{Related Work}


An idea related to KCas is self-distillation (SD), which is also developed to deploy complex models effectively. SD methods use the same architecture for both the teacher and student models, and facilitate the training procedure by letting the knowledge be transferred/exchanged among a group of models or within a single model \citep{zhang2020self,mobahi2020self,zhang2019your,phuong2019distillation,yang2019snapshot,hou2019learning,lan2018self}. 
However, models in standard SD methods are still relatively large, and the model training procedure is accelerated and improved by distilling knowledge from itself. In this sense, KCas differs from SD by using information from some `actually small' student models that are much easier to train.
In scientific scenarios where the pilot study is needed to determine the design of extensive and detailed experiments, KCas can use the pilot data to construct student models to avoid wasting valuable data in the pilot study.
Thus, KCas is highly desirable in these settings.
Note that in \citet{yuan2020revisiting}, the authors also discussed the possibility of reversing the knowledge distillation procedure, but their methodology is still under the SD framework. \citet{yuan2020revisiting} reverses the KD procedure as a motivating example for proposing the Teacher-free Knowledge Distillation (Tf-KD) framework.
They prove the equivalence between KD and label smoothing regularization in a certain sense, and using this fact, Tf-KD lets a student model learn from itself or manually designed regularization distribution.
Therefore, the student model in Tf-KD serves the purpose of regularization, while the student model in KCas serves the role of extracting information, and KCas amplifies this information to help the teacher model.
Thus, our proposed KCas and the associated theories are significantly different from \citet{yuan2020revisiting} and various self-distillation methods. 

Further, \citet{xie2020self} proposes a similar idea with KCas of distilling the knowledge from the student model to help train the teacher model. They first train a noisy student model on the available labeled data, which is then used to generate pseudo-labels for the unlabeled data. Subsequently, the teacher model is then trained on the combined set of labeled and pseudo-labeled data. On the contrary, our research focuses on addressing situations where the model training process requires an exceptionally high computational burden or when completing the necessary computations becomes impractical. Therefore, \citet{xie2020self} addresses scenarios where the training model has sufficient computational resources but insufficient data, whereas our work focuses on situations where there is abundant data but limited computational resources available for training the model.

From a methodological perspective, KCas can be viewed as a strategy for efficient hyperparameter selection. Recent advancements in hyperparameter selection techniques continue to refine the efficacy of model selection. General approaches of hyperparameter selection include the classic cross-validation \citep{stone1974cross, stone1978cross}, criterion-based methods \citep{akaike1998information, schwarz1978estimating}, and a comprehensive review of recent developments can be found in \citet{bischl2023hyperparameter}. In the context of smoothing spline models, a series of methods based on generalized cross-validation (GCV) has been developed \citep{gu2014smoothing,gu2001cross,hall1990using,wahba1985comparison,gu1991minimizing}. In this paper, we propose to conduct hyperparameter selection via the idea of knowledge cascade.

\section{Preliminaries}
In this section, we provide the necessary background of nonparametric functional estimation in reproducing kernel Hilbert space before introducing our proposed method.
To estimate a function of interest $\eta$ on a generic domain $\mathcal{X}$, we consider the nonparametric penalized loss functional,
\begin{equation}\label{PLik}
   PL = L(\eta)+ {\lambda} J(\eta),
\end{equation}
where
$L(\eta)$ is the goodness-of-fit (loss) functional and $J(\eta)$ is the smoothness (penalty) functional. The smoothing parameter $\lambda$ controls the trade-off between the smoothness of $\eta(x)$ and its fidelity to the data.


\noindent   {\textbf{Functional ANOVA Decomposition.}}
The estimation of function $\eta$ on the product domain $\mathcal{X}=\prod_{j=1}^{d} \mathcal{X}_{j}$ has long been a crucial problem in statistical learning, with numerous proposed methods over the years \citep{jeon2006effective, chen2016comprehensive, lin2006component, perez2009bayesian, bosq2012nonparametric}. 
However, most of them have been challenged by the curse of dimensionality, as the estimation of multivariate functions is intrinsically difficult. To overcome this challenge, one effective approach is to decompose multivariate functions using techniques similar to the classical analysis of variance (ANOVA) decomposition and its associated notions of the main effect and interaction \citep{gu2013nonparametric, gu2003penalized, kim2004smoothing, huang1998projection, jeon2006effective}. 
In this functional ANOVA model, higher-order interactions are often excluded in practical estimation to control model complexity; excluding all interactions yields the popular additive models.
On the product domain $\mathcal{X}=\prod_{j=1}^{d} \mathcal{X}_{j}$, the function $\eta$ can be decomposed as a sum of a constant term, one-dimensional functions (main effects), two-dimensional functions (two-way interactions), and so on, as in the following decomposition,
\begin{equation}\label{eq:decomposition}
    \eta(x)=\eta\left(x_{\langle 1\rangle}, \ldots, x_{\langle d\rangle}\right)=\eta_{\emptyset}+\sum_{j} \eta_{j}\left(x_{\langle j\rangle}\right)+\sum_{j<k} \eta_{j, k}\left(x_{\langle j\rangle}, x_{\langle k\rangle}\right)+\ldots,
\end{equation}
with the constant in $\eta_{\emptyset}$, the main effects in $\eta_{j}$, the two-way interactions in $\eta_{j, k}$, etc. Higher-order interactions are eliminated to ease the curse of dimensionality.

\noindent   {\textbf{Reproducing Kernel Hilbert Space.}}
By adding the smoothness penalty $J(\eta)$ to $L(\eta)$ in loss (\ref{PLik}), we consider the space $\mathcal{H} \subseteq\{\eta: J(\eta)<$ $\infty\}$ in which $J(\eta)$ is a square semi-norm with a finite-dimensional null space $\mathcal{N}_{J}=\{\eta: J(\eta)=0\}$. To assist analysis and computation, a metric and geometry should be defined in this space, and the loss (\ref{PLik}) needs to be continuous in $\eta$ under this metric. Since the reproducing kernel Hilbert spaces (RKHSs) are well suited for this purpose, we consider the space $\mathcal{H}$ as an RKHS with the continuous evaluation $[x] f=f(x)$, reproducing kernel $R(\cdot, \cdot)$, a non-negative definite function satisfying $\langle R(x, \cdot), f(\cdot)\rangle=f(x), \forall f \in \mathcal{H}$, where $\langle\cdot, \cdot\rangle$ is the inner product in $\mathcal{H}.$ 
The existence of the minimizer in RKHS is guaranteed by \citet{wahba1990spline}. Details are discussed in Appendix \ref{Exist}.

\noindent\textbf{Density Estimation.}
Consider the situation that we have independently identically distributed (i.i.d.) data points $x_i,\ i=1,\cdots n$, from an underlying data distribution $p(x)$ on a bounded domain $\mathcal{X}=\prod_{j=1}^{d} \mathcal{X}_{j}$. We aim to estimate $p(x)$ based on observations $x_i$. For the nonparametric setting, a naive maximum likelihood density estimation is meaningless without any nonintrinsic constraint, since it will fit a sum of delta function spikes at the sample points $x_i$, which is obviously not an appealing estimate when the domain $X$ is continuous. Thus, a penalized likelihood estimate (PLE) is a good candidate. Two intrinsic constraints come from the definition of a probability density that $p(\cdot)\ge 0$ and $\int_{\mathcal{X}} p dx=1$.  Since these two constraints are difficult to handle directly in computation, a common approach \citep{gu1993smoothing, silverman1982est} is to estimate the log-density $\eta(\cdot)$, which is free of the constraints through the transformation 
\begin{equation*}\label{eq:den_trans}
    p(x)= \frac{e^{\eta(x)}}{\int_{\mathcal{X}} e^{\eta(x)} d x}. 
\end{equation*}
\citet{silverman1982est} proposed and studied the theoretical properties of the penalized likelihood over a Hilbert space $\mathcal{H}$:
\begin{equation}\label{PLE}
    -\frac{1}{n} \sum_{i=1}^{n} \eta\left(x_{i}\right) + \log \int_{\mathcal{X}} e^{\eta (x)} d x+ \frac{\lambda}{2} J(\eta). 
\end{equation}
\noindent   {\textbf{Nonparametric Regression.}}
Consider the exponential family with the densities of the form 
\begin{equation*}\label{eq:exp}
    f(y \mid x)=\exp \{\frac{y \eta(x)-h(\eta(x))}{a(\phi)}+c(y, \phi)\},
\end{equation*}
where $a(\cdot)>0$, $h$, and $c$ are known functions, $\eta(x)$ is the regression function via the link $\eta$, and $\phi$ is the parameter that is independent of $x$. Observing $Y_i \mid x_i \sim f(y\mid x_i)$, $i=1,\cdots ,n$, we estimate $\eta(x)$ via the penalized likelihood functional
\begin{equation}\label{eq:PL_reg}
    -\frac{1}{n} \sum_{i=1}^{n}\left\{Y_{i} \eta\left(x_{i}\right)-h\left(\eta\left(x_{i}\right)\right)\right\}+\frac{\lambda}{2} J(\eta),
\end{equation}
where the term $c(y,\phi)$ is dropped as it is independent of $\eta(x)$, and $a(\phi)$ is absorbed into $\lambda$.

\section{Methodology}
Nonparametric penalized estimation of the function of interest $\eta$ is a general question in lots of fields \citep{sun2016statistical,helwig2016smoothing}. However, the computational burden of training complex models limits the applicability of many existing methods to large datasets. To tackle this challenge, we propose a reverse version of knowledge distillation, named knowledge cascade, by cascading the knowledge learned from a student model (trained on a small sample) to the teacher model (trained on a large sample). Specifically, we illustrate our method in the context of nonparametric functional estimation, with two important cases: density estimation and regression functional estimation.

\subsection{Minimizer of the Penalized Loss Functional}

We first introduce the computation for the minimizer of the general penalized loss functional (\ref{PLik}). Consider a tensor sum decomposition of the RKHS $\calH = \calN_J \oplus \calH_{J}$. Let $\{\phi_v\}_{v=1}^M$ be a basis of $\mathcal{N}_J = \{\eta: J(\eta) = 0\}$ and $R_J$ be the reproducing kernel in $H_J$. Taking the ANOVA decomposition (\ref{eq:decomposition}) into consideration, the RKHS $\calH_J$ can be further decomposed into $\calH_J= \oplus_{\beta=1}^g \calH_{\beta}$ with the reproducing kernel $R_J=\sum_{\beta=1}^g \theta_\beta R_\beta $, where $R_{\beta}$ is the reproducing kernel in $\calH_{\beta}$. Here the $\theta_{\beta}$'s are an extra set of smoothing parameters adjusting the contribution of the corresponding components. The minimizer of loss (\ref{PLik}) is achieved in the tensor product reproducing kernel Hilbert space $\mathcal{H}$ with the smoothness penalty $J(\eta)=J(\eta, \eta)=\sum_{\beta=1}^g \theta_\beta^{-1}(\eta, \eta)_\beta$, where $(\eta, \eta)_\beta$ are inner products in  $\mathcal{H}_{\beta}$  with reproducing kernels $R_\beta$.
Without loss of generality, we define $J(\eta)$ with tensor-product cubic splines throughout the paper. To ease the notation, we give an example of a tensor product cubic spline on $[0,1]^2$ in Appendix C, and please refer to \citet{gu2013smoothing} for the explicit forms of $\{R_\beta\}_{\beta=1}^g$ and $J(\eta)$.

According to the Kimeldorf-Wahba representer theorem \citep{wahba1990spline, kimeldorf1971some, wang2011smoothing}, the minimizer of loss (\ref{PLik}) has the following form
\begin{equation}\label{eq:eta_minimizer}
\eta(x) = \sum_{v=1}^M d_v\phi_v(x)+\sum_{i=1}^{n} c_i R_J(x_i,x) = \bphi(x)^{T} \bsd+\bxi(x)^{T} \bsc,
\end{equation}
where $\bsd = (d_1,\cdots,d_M)^T$, $\bsc = (c_1,\cdots,c_n)^T$ are unknown coefficients, $\bphi(x)=(\phi_1, \cdots , \phi_M)^T$, $\bxi(x)=(R_J(x_i,\cdot), \cdots , R_J(x_n,\cdot))^T$ are vectors of functions. 
Taking advantage of the representer theorem, the infinite-dimensional optimization problem is transferred into a finite-dimensional one, thereby facilitating the estimation.

For the density estimation problem, plugging the representer of $\eta(x)$ (\ref{eq:eta_minimizer}) into the penalized likelihood of density estimation (\ref{PLE}), the estimation reduces to the minimization of
\begin{equation}\label{eq:loss_den}
    -\frac{1}{n} \mathbf{1}^{T}(Q \mathbf{c}+S \mathbf{d})+\log \int \exp \left(\bphi(x)^{T} \mathbf{d}+\bxi(x)^{T} \mathbf{c}\right) \mathrm{d} x+ \frac{\lambda}{2} \mathbf{c}^{T} Q \mathbf{c},
\end{equation}
where $Q$ is $n\times n$ with the $(i,j)$th entry $R_J(x_i, x_j)$ and $S$ is $n\times M$ with the $(i,v)$th entry $\phi_v(x_i)$.
Similarly, the minimization of the penalized likelihood functional (\ref{eq:PL_reg}) for regression can be achieved via the minimizer of:
\begin{equation}\label{eq:loss_reg}
    \frac{1}{n} \left( \tilde{\bsY} - S \mathbf{d} -Q \mathbf{c} \right)^T W \left( \tilde{\bsY} - S \mathbf{d} -Q \mathbf{c} \right) + \frac{\lambda}{2} \mathbf{c}^{T} Q \mathbf{c},
\end{equation}
where $W$ is the weight matrix, and the explicit form of $\tilde{\bsY}$ and $W$ can be found in Appendix \ref{Minimizer}.
Fixing smoothing parameters $\lambda$ and $\theta$, we can estimate the coefficients $\bsd$ and $\bsc$ in (\ref{eq:loss_den}) or (\ref{eq:loss_reg}) using Cholesky decomposition \citep{golub2013matrix} or Newton-Raphson method \citep{gu1993smoothing, gu2013smoothing}.

To make the estimation practical, a critical aspect is selecting appropriate values for $\lambda$ and $\theta$ that result in satisfactory performance, since the solution of the penalized loss functional (\ref{PLik}) is sensitive to $\lambda$ and $\theta$ \citep{jeon2006effective, gu2013smoothing}. Smoothing parameters control the trade-off between the smoothness of $\eta(x)$ and its fidelity to the data. Selecting a large smoothing parameter will lead to oversmoothing, while a small one will lead to undersmoothing.

One of the most efficient criteria for selecting the smoothing parameters is generalized cross-validation (GCV) \citep{gu1992cross, gu1991minimizing}, which achieves the selection via cross-validation targeting the Kullback--Leibler (KL) loss.
GCV consists of two main steps: (i) minimizing the KL loss with respect to $\lambda$ for fixed $\theta$; (ii) updating $\theta$ according to the updated $\lambda$. 
However, the computational cost of tuning the parameters, particularly $\lambda$, can be prohibitively high in high-dimensional settings. With all $S$ smoothing parameters tunable, the above iterative algorithm takes $O(Sn^3)$ flops per iteration and needs tens of iterations to converge \citep{gu1991minimizing}. Here the number of smoothing parameters $S$ increases as the number of multi-way interaction terms grows. In particular, $\eta(x)=\eta\left(x_{\langle 1\rangle }, \ldots, x_{\langle d\rangle }\right)=\eta_{\emptyset}+\sum_{j} \eta_{j}\left(x_{\langle j\rangle }\right)+\sum_{j<k} \eta_{j, k}\left(x_{\langle j\rangle }, x_{\langle k\rangle }\right)$, the ANOVA decomposition model (\ref{eq:decomposition}) truncated at two-way interactions contains $S=d+\frac{3}{2}d(d-1)$ smoothing parameters. Considering the computational burden, in the case of a particularly large sample size, it is impractical to apply GCV on the full sample, i.e., train the teacher model directly, to accomplish the regression and density estimation tasks using the smoothing spline ANOVA model. To tackle this problem, we propose the knowledge cascade (KCas) method, which enables the determination of the smoothing parameters without incurring a significant computational burden during estimation.

\subsection{Knowledge Cascade}\label{sec:KCas}
In KCas, we aim to let student models learn the smoothing parameters through optimization, with a significantly lower computational burden compared to the teacher, and then transfer the smoothing parameters to teacher models. We first illustrate the definitions of the student and teacher models in our context of nonparametric functional estimation. The teacher model is a complex model trained on the full sample, and the student model is a simple model trained on a subset of the sample with a size of $b$. In (\ref{eq:eta_minimizer}), we can see that the number of kernels, i.e., $R_J(x_i,x)$, equals the sample size $n$. Since the number of kernels highly affects the representation power of the model, the model complexity differs significantly for large and small sample sizes and thus distinguishes the student and teacher models. 
We first illustrate the simplest version of KCas in the general regression model with additive noise,
\begin{equation}\label{eq:ref}
    Y_i = \eta(x_i) + \epsilon(x_i),
\end{equation}
where $\epsilon(\cdot)$ is the white noise process satisfying $\E(\epsilon(x_i))=0$, $\E(\epsilon(x_i)\epsilon(x_j))=\sigma^2$ if $x_i=x_j$, $\E(\epsilon(x_i)\epsilon(x_j))=0$ otherwise. 
Define Hilbert space $\calH^{(m)}$ by 
\begin{equation*}
\begin{aligned}
    \calH^{(m)}=\left\{\eta: \eta^{(v)}\right. &\text{absolutely continuous for } v=0,1, \ldots, m-1, \eta^{(m)} \in \mathcal{L}_{2}[0,1],\\ &\left.\eta^{(v)}(0)-\eta^{(v)}(1)=0 \text{ for } v=0,1, \ldots, m-1 \right\} ,
\end{aligned}
\end{equation*}
where $\eta^{(v)}= \frac{\mathrm{d}^v \eta}{ \mathrm{d} x^v}$, $\mathcal{L}_{2}[0,1]=\left\{f: \int_0^1 f^2 d x<\infty\right\}$ is the Hilbert space formed by all square-integrable functions, and $m$ is a constant indicating the order of smoothness of $\calH^{(m)}$.
For smoothing splines in $\calH^{(m)}$, the optimal smoothing parameter $\lambda$, ignoring $o(1)$ terms, is
\begin{equation*}\label{eq:lambda}
    \lambda = C n^{-\frac{2 m}{2 mp+1}},
\end{equation*}
where $C$ is an unknown constant depending on unknown function $\eta$ \citep{wahba1977practical} and $p\in [1,2]$ indicates different additional smoothness conditions \citep{wahba1977practical,craven1978smooth, wahba1985comparison}. The estimation of $C$ is infeasible since it depends on the unknown true function $\eta$. However, KCas can infer the information of $C$ from a well-trained subsample model (student) and apply it to the full data model (teacher).
Specifically, notice that the asymptotically optimal $\lambda_{\mathrm{GCV}}^{\mathrm{sub}}$ when sample size equals $b$ is 
\begin{equation*}
    \lambda_{\mathrm{GCV}}^{\mathrm{sub}}(b) = Cb^{-\frac{2 m}{2 mp+1}},
\end{equation*}
for the same $C$. 
We estimate the optimal $\lambda_{\mathrm{GCV
    }}^{\mathrm{sub}}$ on the subsample to infer the constant $C$, and then employ the same $C$ for the full data \citep{sun2021asymptotic}. That is, we have the following estimation of $\lambda$ by KCas,
\begin{equation}\label{KCas}
    \lambda_{\mathrm{KCas}}^{\mathrm{full}}(n;b) = \lambda_{\mathrm{GCV
    }}^{\mathrm{sub}}(b) (\frac{n}{b})^{-\frac{2m}{2mp+1}}.
\end{equation}
Since the smoothing parameters are used to determine the proportion of the smoothness penalty on different terms in (\ref{eq:decomposition}) and this proportion should be stable over different sample sizes, we directly use the optimal $\theta_{\mathrm{GCV}}^{\mathrm{sub}}(b)$ in the full sample. 
We then generalize the estimator (\ref{KCas}) from the regression model with additive noise (\ref{eq:ref}) to a wide range of penalized likelihood estimation problems, including the nonparametric regression in the exponential family and density estimation.

We propose to use KCas to transfer the knowledge of smoothing parameters in (\ref{PLE}) or (\ref{eq:PL_reg}) to the teacher model. The KCas algorithm is summarized in the following Algorithm \ref{algorithm 1}. In the first step, we apply uniform sampling to get a subsample $X_b$, and our experiments show that uniform sampling can achieve good performance. Other more dedicated sampling methods can also be applied to improve the performance further \citep{wang2018optimal, meng2020more,daszykowski2002representative}.
The total number of operations required for each iteration is generally $\frac{4}{3}n^3+O(n^2)$, in which the selection of the smoothing parameter takes the major burden. The GCV algorithm for the student model takes $O(Sb^3)$ flops per iteration, and thus KCas algorithm reduces the computational cost from $O(Sn^3)$ to $O(Sb^3)$. 
According to the justifications in \citet{gu2002penalized,kim2004smoothing,ma2017adaptive,zhang2023optimal}, it is sufficient to take subsample size $O(n^{\frac{2}{9}})$ to maintain the performance of smoothing spline estimation, and we thus take a slightly larger subsample size $b = O(n^{\frac{1}{4}})$ for practical use in our algorithm. Consequently, our proposed KCas method, as detailed in Algorithm \ref{algorithm 1}, substantially lowers the burden of the smoothing parameter estimation process to $O(Sn^{\frac{3}{4}})$. This underscores the pivotal role of KCas.


\begin{algorithm}[H]
\caption{\strut KCas for nonparametric function estimation.}\label{algorithm 1}
\strut \textbf{Input:} Data $X$, subsample size $b$. \
\begin{algorithmic}[1]
\State Use uniform sampling to select a subsample $X_b$ of size $b$ from the full sample $X$ of size $n$. Apply GCV on $X_b$ to estimate the smoothing parameters, denoting by $\lambda_{\mathrm{GCV}}^{\mathrm{sub}}(b)$ and $\theta_{\mathrm{GCV}}^{\mathrm{sub}}(b)$.
\State Get the estimation of smoothing parameters for the full sample $X$ using $\lambda_{\mathrm{KCas}}^{\mathrm{full}}(n;b) = \lambda_{\mathrm{GCV}}^{\mathrm{sub}}(b) (\frac{n}{b})^{-\frac{2m}{2mp+1}}$ and 
$\theta_{\mathrm{KCas}}^{\mathrm{full}}(n;b) = \theta_{\mathrm{GCV}}^{\mathrm{sub}}(b)$.
\State Fit smoothing splines via penalized likelihood on $X$ with $\lambda_{\mathrm{KCas}}^{\mathrm{full}}(n;b)$ and $\theta_{\mathrm{KCas}}^{\mathrm{full}}(n;b)$ to get the function estimation $\hat{\eta}$.
\end{algorithmic}
\strut \textbf{Output:} Estimation $\hat{\eta}$.
\end{algorithm}

In practice, hyperparameters $m$ and $p$ need to be selected properly. For the univariate setting, the commonly used smooth level is $m=2$, which means the penalty is $J(\eta, \eta)=\int_{0}^{1}\left(\eta^{(2)}\right)^{2} \mathrm{~d} x$ on $[0,1]$. For the multivariate setting, we suggest applying the commonly used tensor product cubic spline, which has $2-\epsilon < m < 2, \forall \epsilon >0$ 
\citep{wahba1990spline}.
When $\eta^{(2)}$ is square-integrable, we have $p=1$, and when $\eta^{(4)}$ is square-integrable, we have $p=2$. The selection of $p$ can vary across datasets.
In this work, we take $m=2$ and $p=2$ empirically.



We should mention that GCV is typically used for Gaussian-type regression. For regression with responses from exponential families which results in a nonquadratic loss (\ref{eq:PL_reg}), the computation of $\eta_\lambda$ requires a bunch of iterations even with fixed smoothing parameters and thus results in a more expensive computational cost. 
To address this issue, we use the well-accepted generalized approximate cross-validation (GACV) method \citep{xiang1996generalized,gu2001cross} or its variants to reduce the computational burden. 
Similarly, a variant of GCV \citep{gu2013nonparametric} will be used to effectively select the smoothing parameter.
To make the notation concise, variants of GCV, suitable for different problems, are collectively referred to as `GCV'.


\subsection{Theoretical Analysis}
In this section, we present the theoretical properties of the smoothing parameters $\lambda$ selected according to Algorithm \ref{algorithm 1}. For notational simplicity, in the following, we will use $\lambda$ to represent all the smoothing parameters, not just the $\lambda$ in front of $J(\eta)$. All proofs for this section are relegated to Appendix \ref{proof}.

Denote by $\hat{\eta}$ the estimate through the minimization of (\ref{PLE}) and $\eta_0$ the true function to be estimated, the asymptotic convergence rate based on the selected smoothing parameter $\lambda_{\mathrm{KCas}}^{\mathrm{full}}(n;b)$ is established through Theorem \ref{thm:converge_den}.

\begin{theorem}[rate for density estimates]\label{thm:converge_den}
For the density estimation problem as in (\ref{PLE}), denote $\eta_0$ the true log-density to be estimated.
Under the regularity conditions 
\ref{cond:continuous} to \ref{cond:bound} 
in Appendix \ref{condi}, 
assuming $\lambda_{\mathrm{GCV}}^{\mathrm{sub}}(b) \rightarrow 0$
and $b (\lambda_{\mathrm{GCV}}^{\mathrm{sub}}(b))^{\frac{1}{2m}} \rightarrow \infty$ as $b \rightarrow \infty$,
we have the convergence rate for density estimates,
\begin{equation} \label{eq:thm1}
    (V+ \lambda_{\mathrm{KCas}}^{\mathrm{full}}(n;b)  J)\left(\hat{\eta}-\eta_0\right)=O_{p}\left(n^{-1}  \lambda_{\mathrm{KCas}}^{\mathrm{full}}(n;b) ^{-\frac{1}{2m}}+ \lambda_{\mathrm{KCas}}^{\mathrm{full}}(n;b) ^{p}\right),
\end{equation} 
where $V(\cdot)$ is an interpretable metric such that a small $V(\hat{\eta}-\eta_0)$ indicates a good estimate, $p$ and $m$ are constant parameters specified in Appendix \ref{condi}.
\end{theorem}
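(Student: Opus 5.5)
The plan is to reduce the statement to the classical convergence theorem for penalized likelihood density estimates (Gu and Qiu; Gu 2013, Theorem 9.x). That theorem holds for any \emph{deterministic} sequence $\lambda=\lambda_n$ with $\lambda_n\to 0$ and $n\lambda_n^{1/(2m)}\to\infty$. Under Conditions \ref{cond:continuous}--\ref{cond:bound}, its conclusion is
\[
(V+\lambda J)(\hat\eta_\lambda-\eta_0)=O_p\bigl(n^{-1}\lambda^{-1/(2m)}+\lambda^{p}\bigr).
\]
So the work has three parts: verify that $\lambda_{\mathrm{KCas}}^{\mathrm{full}}(n;b)$ satisfies these rate conditions on the full sample, handle the fact that this $\lambda$ is random, and then invoke the known result.

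\textbf{Step 1: the rate conditions.} By (\ref{KCas}),
\[
\lambda_{\mathrm{KCas}}^{\mathrm{full}}=\lambda_{\mathrm{GCV}}^{\mathrm{sub}}(b)\,(n/b)^{-2m/(2mp+1)}\le \lambda_{\mathrm{GCV}}^{\mathrm{sub}}(b)\to 0,
\]
since $n\ge b$. For the second condition,
\[
n\,(\lambda_{\mathrm{KCas}}^{\mathrm{full}})^{1/(2m)} = b\,(\lambda_{\mathrm{GCV}}^{\mathrm{sub}})^{1/(2m)}\cdot (n/b)^{1-1/(2mp+1)}.
\]
The first factor tends to $\infty$ by assumption, and the exponent $2mp/(2mp+1)$ is positive, so the product tends to $\infty$. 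Both conditions are therefore inherited from the hypotheses on the student.

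\textbf{Step 2: linearization and the two error terms.} Here I follow the standard route. Let $\tilde\eta$ be the minimizer of the quadratic approximation $-\frac1n\sum\eta(x_i)+\mu_{\eta_0}(\eta)+\frac12 V(\eta-\eta_0)+\frac{\lambda}{2}J(\eta)$. Write $\eta-\eta_0$ in the eigenbasis $\{\phi_\nu\}$ that simultaneously diagonalizes $V$ and $J$, with $J(\phi_\nu)=\rho_\nu\asymp \nu^{2m}$. The coefficient-wise solution yields two terms.
\begin{itemize}
\item \emph{Variance term:} $\sum_\nu (1+\lambda\rho_\nu)^{-1}/n \asymp n^{-1}\lambda^{-1/(2m)}$.
\item \emph{Bias term:} $\lambda^p\sum_\nu\rho_\nu^p\eta_{0,\nu}^2$, which is finite by the smoothness condition indexed by $p$.
\end{itemize}
Then $\hat\eta-\tilde\eta$ is controlled by a Taylor expansion of $\log\int e^\eta$ around $\eta_0$. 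The equivalence of $V$ with the second-order term (the boundedness condition) and the condition $n\lambda^{1/(2m)}\to\infty$ make this remainder of smaller order. These arguments can be cited rather than redone.

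\textbf{Step 3: the data-dependent $\lambda$ (main obstacle).} The obstacle is that $\lambda_{\mathrm{KCas}}^{\mathrm{full}}$ is random and depends on the subsample $X_b$, which is part of the full data, so the deterministic-$\lambda$ theorem does not apply directly. I would handle this in two moves.
\begin{enumerate}
\item Make the bounds uniform in $\lambda$ over a shrinking interval $[\lambda_n^-,\lambda_n^+]$ on which both rate conditions hold. The variance and bias bounds are monotone in $\lambda$, and the remainder bounds can be made uniform via a standard chaining/monotonicity argument over a grid in $\lambda$, as in Gu's treatment of GCV-selected $\lambda$.
\item Show that $\lambda_{\mathrm{KCas}}^{\mathrm{full}}$ lies in this interval with probability tending to one, using the stated assumptions on $\lambda_{\mathrm{GCV}}^{\mathrm{sub}}(b)$.
\end{enumerate}
Alternatively, conditioning on $X_b$ reduces the dependence on the $b=o(n)$ subsample points to a negligible perturbation of the empirical term. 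With either route, plugging the random $\lambda$ into the uniform bound gives (\ref{eq:thm1}).
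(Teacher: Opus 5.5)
Your Steps 1 and 2 are the paper's proof. The paper verifies $\lambda_{\mathrm{KCas}}^{\mathrm{full}}(n;b)\le\lambda_{\mathrm{GCV}}^{\mathrm{sub}}(b)\to0$. By the same algebra as yours it shows $n(\lambda_{\mathrm{KCas}}^{\mathrm{full}})^{1/(2m)}=n^{2mp/(2mp+1)}b^{1/(2mp+1)}(\lambda_{\mathrm{GCV}}^{\mathrm{sub}}(b))^{1/(2m)}\ge b(\lambda_{\mathrm{GCV}}^{\mathrm{sub}}(b))^{1/(2m)}\to\infty$. It then cites Chapter 9 of Gu (2013) for the rate, without re-deriving the linearization you outline.

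Your Step 3 has no counterpart in the paper. The paper treats $\lambda_{\mathrm{GCV}}^{\mathrm{sub}}(b)$ as a nonrandom sequence obeying the stated limits and plugs $\lambda_{\mathrm{KCas}}^{\mathrm{full}}$ straight into the fixed-$\lambda$ theorem. It never addresses the randomness of that $\lambda$ or its dependence on $X_b\subset X$. So by the paper's standard you are finished after Step 2.

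You are right that this is a real issue, but Step 3 is only a sketch, and what it leans on cannot be cited. Gu's Chapter 9 results are for a given sequence $\lambda_n$. The optimality theory for GCV-selected $\lambda$ (Craven--Wahba, Li) is for Gaussian-type regression and concerns loss ratios, not $O_p$ rates for penalized-likelihood density estimates. You would have to prove the uniform bound yourself.

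\begin{itemize}
\item \textbf{Linear approximation.} This part is feasible. In the eigenbasis the bias part is bounded by $\lambda^p\sum_\nu\rho_\nu^p\eta_{\nu,0}^2$ for every $\lambda$, and the stochastic part $\sum_\nu\beta_\nu^2/(1+\lambda\rho_\nu)$ is decreasing in $\lambda$. However, Markov's inequality with a union bound over the $O(\log n)$ points of a dyadic $\lambda$-grid only gives $O_p(\log n)$. To get $O_p(1)$ you need second moments of $n\beta_\nu^2$, which the fourth-moment condition supplies.
\item \textbf{Remainder.} The term $\hat\eta-\tilde\eta$ must also be controlled at the data-dependent $\lambda$.
\item \textbf{Reading the hypotheses.} The hypotheses on $\lambda_{\mathrm{GCV}}^{\mathrm{sub}}(b)$ must be interpreted as convergence in probability. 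Only then can you place $\lambda_{\mathrm{KCas}}^{\mathrm{full}}$ in $[\lambda_n^-,\lambda_n^+]$ with probability tending to one.
\item \textbf{Conditioning alternative.} Conditioning on $X_b$ handles the $n-b$ points outside $X_b$, which are independent of $\lambda$ given $X_b$. But the in-sample contribution of $X_b$ is still evaluated at an $X_b$-dependent $\lambda$, so it needs the same uniform control.
\end{itemize}
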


With Theorem \ref{thm:converge_den}, the consistency is ensured, and the convergence rate is specified for the estimation $\hat{\eta}$ in density estimation based on KCas. Denote by $\hat{\eta}$ the estimate through the minimization of (\ref{eq:PL_reg}), the following Theorem \ref{thm:converge} further demonstrates the theoretical properties of KCas in the context of nonparametric regression.
\begin{theorem}[rate for regression estimates]\label{thm:converge}
For the regression in exponential families as in (\ref{eq:PL_reg}), denote $\eta_0$ the true function to be estimated.
Under the regularity conditions 
\ref{cond:continuous}, \ref{cond:summation}, \ref{cond:eigenvalue}, \ref{cond:bound}, and \ref{cond:equi}
in Appendix \ref{condi}, 
assuming $\lambda_{\mathrm{GCV}}^{\mathrm{sub}}(b) \rightarrow 0$
and $b (\lambda_{\mathrm{GCV}}^{\mathrm{sub}}(b))^{\frac{1}{m}} \rightarrow \infty$ as $b \rightarrow \infty$,
we have the convergence rate for regression estimates,
\begin{equation}  \label{eq:thm2}
    (V+ \lambda_{\mathrm{KCas}}^{\mathrm{full}}(n;b)  J)\left(\hat{\eta}-\eta_0\right)=O_{p}\left(n^{-1}  \lambda_{\mathrm{KCas}}^{\mathrm{full}}(n;b) ^{-\frac{1}{2m}}+ \lambda_{\mathrm{KCas}}^{\mathrm{full}}(n;b) ^{p}\right), 
\end{equation} 
where $V(\cdot)$, $p$ and $m$ are defined in the same way as in Theorem \ref{thm:converge_den}.
\end{theorem}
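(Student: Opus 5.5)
The plan is to reduce Theorem \ref{thm:converge} to the standard convergence theory for penalized likelihood regression in exponential families (Gu, \emph{Smoothing Spline ANOVA Models}, Ch.~8; Gu and Qiu). That theory holds for any deterministic or data-dependent smoothing parameter sequence $\lambda=\lambda_n$ with $\lambda_n\to 0$ and $n\lambda_n^{1/m}\to\infty$. In that regime it gives $(V+\lambda J)(\hat\eta-\eta_0)=O_p(n^{-1}\lambda^{-1/(2m)}+\lambda^p)$. The argument therefore has two parts. First, verify that $\lambda_{\mathrm{KCas}}^{\mathrm{full}}(n;b)$ satisfies these rate conditions. Second, re-run the linearization argument with $\lambda$ replaced by this data-dependent quantity.

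\textbf{Step 1: the rate conditions.} By (\ref{KCas}), $\lambda_{\mathrm{KCas}}^{\mathrm{full}}=\lambda_{\mathrm{GCV}}^{\mathrm{sub}}(b)(n/b)^{-r}$ with $r=2m/(2mp+1)>0$.
\begin{itemize}
\item Since $\lambda_{\mathrm{GCV}}^{\mathrm{sub}}(b)\to0$ and $n\ge b$, we get $\lambda_{\mathrm{KCas}}^{\mathrm{full}}\to0$.
\item For the second condition,
\[
n(\lambda_{\mathrm{KCas}}^{\mathrm{full}})^{1/m}=b(\lambda_{\mathrm{GCV}}^{\mathrm{sub}})^{1/m}\,(n/b)^{1-r/m}.
\]
Here $1-r/m=1-2/(2mp+1)>0$ because $2mp+1>2$. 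The first factor diverges by hypothesis, and the second is $\ge1$. Hence $n(\lambda_{\mathrm{KCas}}^{\mathrm{full}})^{1/m}\to\infty$.
\end{itemize}
The stronger condition $b\lambda^{1/m}\to\infty$ (compared with $b\lambda^{1/(2m)}\to\infty$ in the density case) is needed because in regression the remainder control uses $n\lambda^{1/m}\to\infty$. The $\theta$'s are carried over unchanged, so they stay in a fixed compact set bounded away from zero. This keeps the norm equivalences in Conditions \ref{cond:eigenvalue} and \ref{cond:equi} uniform.

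\textbf{Step 2: the linearization argument.} Following Gu's proof, I will use the simultaneous diagonalization of $V$ and $J$ from Condition \ref{cond:eigenvalue}, with eigenvalues $\rho_\nu\asymp\nu^{2m}$. I define the quadratic approximation $\tilde\eta$, the minimizer of
\[
-\frac1n\sum_i\{Y_i\eta(x_i)-\dot h(\eta_0(x_i))\eta(x_i)\}+\tfrac12 V(\eta-\eta_0)+\tfrac\lambda2 J(\eta).
\]
Expanding in the eigenbasis, the variance term is
\[
\sum_\nu \frac{1}{n(1+\lambda\rho_\nu)^2}=O(n^{-1}\lambda^{-1/(2m)}).
\]
Condition \ref{cond:summation} gives the bias term $O(\lambda^p)$. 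Together these give the rate for $(V+\lambda J)(\tilde\eta-\eta_0)$. Next I bound $(V+\lambda J)(\hat\eta-\tilde\eta)$. To do so I Taylor-expand the score of the exact functional around $\tilde\eta$, use Condition \ref{cond:equi} to replace empirical averages $\frac1n\sum_i f(x_i)g(x_i)$ by $V(f,g)$ uniformly, and use Condition \ref{cond:bound} to bound $\ddot h$ near $\eta_0$. This shows the remainder is of smaller order, provided $\lambda\to0$ and $n\lambda^{1/m}\to\infty$.

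\textbf{Main obstacle.} The main difficulty is that $\lambda_{\mathrm{KCas}}^{\mathrm{full}}$ is random. The classical bounds are stated for deterministic $\lambda$. I would first handle this by conditioning on the subsample-dependent $\lambda_{\mathrm{GCV}}^{\mathrm{sub}}(b)$. The estimator is refit on the full sample, which contains the subsample, so $\lambda$ is not strictly independent of the data. To get around this, I would make the $O_p$ bounds uniform over $\lambda$ in a shrinking interval $[\lambda_{1n},\lambda_{2n}]$ on which Step 1's conditions hold, and then check that $\lambda_{\mathrm{KCas}}^{\mathrm{full}}$ falls in that interval with probability tending to one. Uniformity is achievable because the variance and bias bounds are monotone in $\lambda$, and the equicontinuity in Condition \ref{cond:equi} is uniform over the relevant function class. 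Substituting $\lambda=\lambda_{\mathrm{KCas}}^{\mathrm{full}}(n;b)$ then yields (\ref{eq:thm2}).
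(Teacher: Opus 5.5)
Your Step 1 is exactly the paper's proof: the paper checks $\lambda_{\mathrm{KCas}}^{\mathrm{full}}(n;b)\to 0$ and $n(\lambda_{\mathrm{KCas}}^{\mathrm{full}}(n;b))^{1/m}\ge b(\lambda_{\mathrm{GCV}}^{\mathrm{sub}}(b))^{1/m}\to\infty$ using $2mp>1$ and $n\ge b$, then invokes the rate theorem in Chapter 9 of Gu's monograph, treating $\lambda_{\mathrm{KCas}}^{\mathrm{full}}$ as an ordinary smoothing-parameter sequence, so your approach is the same. Your Step 2 and your uniform-in-$\lambda$ handling of the data-dependent $\lambda$ go beyond the paper; if you keep them, note two things: Condition~\ref{cond:bound} controls the empirical averages $\frac{1}{n}\sum_i \phi_\nu(x_i)\phi_\mu(x_i)w$ and Condition~\ref{cond:equi} bounds $\ddot h$ near $\eta_0$ (you have these roles swapped), and monotonicity of the expected bounds in $\lambda$ does not by itself give $O_p$ bounds uniform over a range of $\lambda$ without an extra grid-plus-tail-bound argument.
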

\noindent \textbf{Remark.}
Note that the rates (\ref{eq:thm1}) and (\ref{eq:thm2}) are free of dimension $d$, which is because we adopt the dimensionless approach as in Chapter 9 in \citet{gu2013smoothing}, which essentially incorporates assumptions on the smoothness to get rid of the dimension in the rates.
Specifically, Condition \ref{cond:eigenvalue} implicitly puts an assumption on the smoothness of $\eta$, which makes the error rates (\ref{eq:thm1}) and (\ref{eq:thm2}) to be free of dimension $d$. The condition becomes more stringent as dimension $d$ becomes larger.
The results can be adapted to the case with dimension using arguments in \citet{lin2000tensor}.

\subsection{Knowledge Cascade in Kernel Density Estimation}

We next illustrate the KCas framework in the classical univariate kernel density estimation (KDE) problem. Let $x_1,\ldots,x_n$ be i.i.d.\ observations from an unknown density $f$ on $\mathbb{R}$. For a symmetric kernel $K$ and a bandwidth $h>0$, the Gaussian kernel density estimator is
\[
\hat{f}_h(x)
  = \frac{1}{n}\sum_{i=1}^n K_h(x-x_i)
  = \frac{1}{nh} \sum_{i=1}^n K\!\left(\frac{x-x_i}{h}\right), 
\]
where we take $K$ to be the standard normal density throughout \citep{silverman2018density}. For twice differentiable $f$, the classical asymptotic mean integrated squared error (AMISE) theory yields an optimal bandwidth of the form
\[
h_{\mathrm{AMISE}}
  = \Bigg\{\frac{R(K)}{m_2(K)^2\,R\!\big(f''\big)}\Bigg\}^{1/5} n^{-1/5}
  = C\,n^{-1/5},
\]
where $R(g)=\int g^2(x)\,dx$, $m_2(K)=\int x^2 K(x)\,dx$, and $C$ is an unknown constant depending on $f$ and $K$ \citep{silverman2018density,jones1996survey}.

In KCas, we treat a KDE fitted on a small subsample as the student model and the full-sample KDE as the teacher model. The key idea is to estimate the constant $C$ from a subsample of size $b$ and then transfer it to the full data via the AMISE scaling. Specifically, the AMISE-optimal bandwidth for the subsample satisfies
\[
h_{\mathrm{AMISE}}^{\mathrm{sub}}(b) = C\,b^{-1/5},
\]
so that $C = h_{\mathrm{AMISE}}^{\mathrm{sub}}(b)\,b^{1/5}$. Plugging this expression into the full-sample formula gives the KCas bandwidth
\begin{equation}
\label{eq:KCas_KDE}
h_{\mathrm{KCas}}^{\mathrm{full}}(n; b)
  = h_{\mathrm{AMISE}}^{\mathrm{sub}}(b)\!\left(\frac{n}{b}\right)^{-1/5}.
\end{equation}
In our implementation, $h_{\mathrm{AMISE}}^{\mathrm{sub}}(b)$ is not computed by oracle access to $f$ but by a data-driven selector applied to the subsample; the KCas bandwidth then extrapolates this choice to the full data. This is a direct KDE analogue of the smoothing-parameter scaling used in smoothing splines.

The same knowledge cascade principle extends naturally beyond KDE to a broad class of kernel-based methods. In many kernel procedures, such as kernel regression \citep{wand1994kernel}, local polynomial smoothing \citep{fan2018local}, and kernel-based covariance estimation \citep{ferraty2006nonparametric}, the optimal tuning parameters follow explicit sample-size dependent scaling laws derived from asymptotic theories. KCas exploits this structure by learning the problem-dependent constants from a small subsample and transferring them to the full sample through the corresponding scaling relations, thereby reducing computational cost while preserving statistical efficiency. This perspective suggests KCas as a general strategy for tuning-parameter transfer in kernel methods with known asymptotic rates.

\subsection{Knowledge Cascade in Deep Learning}

Although KCas is developed in the context of nonparametric multivariate functional estimation, the underlying principle extends naturally to settings in which computational constraints make full hyperparameter tuning difficult. In many deep learning applications, the cost of identifying stable and high-performing hyperparameter configurations, particularly the learning rate schedule, dominates the total training budget. Motivated by this challenge, we investigate a simple extension of KCas that uses a compact student network to guide the hyperparameter choices for a larger teacher network.

In this setting, the student model is trained on a reduced architecture and its optimal hyperparameters are obtained through standard tuning procedures. KCas then scales these hyperparameters to the teacher model using a batch-size dependent rule. This approach acknowledges empirical findings that learning rate schedules tuned on smaller networks often transfer to larger networks when batch sizes are adjusted appropriately. Consequently, KCas reduces hyperparameter search cost by allowing tuning to be performed once on the student model and then extrapolated to the teacher.

Mathematically, let $\eta_{\text { student }}$ denote the optimal learning rate for the student model, and let $B_{\text {student }}$ and $B_{\text {teacher }}$ be the respective batch sizes. The KCas learning rate for the teacher model is defined as
\begin{equation*}
    \eta_{\text { teacher }}=\eta_{\text { student }} \cdot g\left(\frac{B_{\text {teacher }}}{B_{\text {student }}}\right) ,
\end{equation*}
where $g(\cdot)$ is a monotone scaling function. We consider two commonly used choices. The first is the linear rule $g(r)=r$, inspired by the large-batch scaling observations in \citet{goyal2017accurate}. Since teacher architectures in our experiments are substantially more complex and often benefit from more conservative step sizes, we also examine the square-root rule $g(r)=\sqrt{r}$. These two options illustrate how KCas can adapt scaling behavior to model complexity.

Although this extension does not rely on the RKHS-based theory that supports KCas in nonparametric functional estimation, it follows the same guiding philosophy: use a computationally efficient student model to extract stable hyperparameter information and propagate it upward to a more complex teacher model. As demonstrated in Section \ref{sec_real:dl}, this strategy provides a practical reduction in tuning cost while maintaining competitive predictive performance for modern deep learning architectures.

\section{Simulation Study}

We conduct simulation studies to evaluate the proposed KCas framework from two perspectives. First, we examine its performance in the main setting of this paper: smoothing spline ANOVA models for density estimation and nonparametric regression. These experiments assess whether smoothing parameters learned from a subsample-based student model can be effectively transferred to the full-sample teacher model while reducing computational cost. Second, we investigate KCas in KDE to illustrate the same student-to-teacher transfer principle in a different nonparametric setting where bandwidths follow classical asymptotic scaling laws.

\subsection{Simulation 1: Density Estimation with Smoothing Splines} \label{sec:s1}
We first evaluate the proposed KCas method for smoothing splines on synthetic density estimation problems. We consider the following two data-generating scenarios.
\\
\textbf{Scenario 1:} A $d$-dimensional Gaussian mixture model, 
$\frac{1}{d} \sum_{i=1}^d \text{MVN}(e_i, I_d)$,
where $e_i$ is the vector with the $i$th entry being $1$ and others being $0$. We consider $d=3,6$.
\\
\textbf{Scenario 2:} A $d$-dimensional density is constructed by independently combining a $5$-dimensional Gaussian with mean zero and variance $0.5(\mathbf{1}\mathbf{1}^T)+1.5I_d$, and the remaining $d-5$ variables are i.i.d. from Unif$(0,1)$. We consider $d=15,20$.

We evaluate the methods using the relative Kullback--Leibler (KL) divergence with respect to the benchmark method, defined as 
\begin{equation}
  \operatorname{RelKL}\left(\hat{p}, p^*, p\right) =   \frac{D_{\mathrm{KL}}(\hat{p}\Vert p)}{D_{\mathrm{KL}}(p^* \Vert p)}, \nonumber
\end{equation}
where $p$ is the true distribution, $\hat{p}$ is the estimate from the method being evaluated, and $p^*$ is the estimate from the benchmark method.
Here $D_{\mathrm{KL}}(q\Vert p)$ denotes the KL divergence of $q$ relative to $p$.
A smaller relative KL divergence indicates better performance. In each replication, full-sample GCV is used as the benchmark, and the relative KL divergence is computed with respect to the corresponding GCV estimate from that replication. For visualization, we report log-transformed RelKL.

For the smoothing spline experiments, including Simulation 1 here and following Simulation 2, we compare KCas with the following baseline methods:
\begin{itemize}
    \item GCV \citep{gu2013nonparametric} on the full sample: Full-sample GCV uses all available observations for smoothing-parameter selection and is therefore used as the benchmark. All other methods are evaluated using relative performance measures with respect to full-sample GCV.
    \item GCV on subsample (SUB): GCV is performed on a randomly selected subsample to reduce computational cost.
    \item GCV in generalized additive models (GAM) \citep{wood2004stable}: GCV is performed on the full sample using a simplified additive model with only main effects.
    \item SKIP method \citep{gu2014smoothing}: SKIP accelerates computation by selecting an appropriate starting point and bypassing subsequent iterations. Since it often fails to converge in relatively high-dimensional density estimation settings, we include it only for the nonparametric regression problem.
    \item Order-based method (ORD) \citep{hall1990using}: ORD directly sets the smoothing parameter to $\lambda=n^{-\frac{2m}{2mp+1}}$, where $n$ is the full sample size.
    \item Kernel density estimation (KDE): For density estimation, we further include the KDE method of \citet{nagler2016evading} as a comparison for high-dimensional data.
\end{itemize}

For KCas, we use uniform sampling to select a subsample of size $b=50n^{\frac{1}{4}}$. The full sample sizes are $5,000$, $10,000$, and $20,000$. Following the discussion in Section \ref{sec:KCas}, we set $m=2$ and $p=2$ in practice. All results are based on $30$ replications.

\begin{figure}[ht]
\centering
\includegraphics[width=1\linewidth]{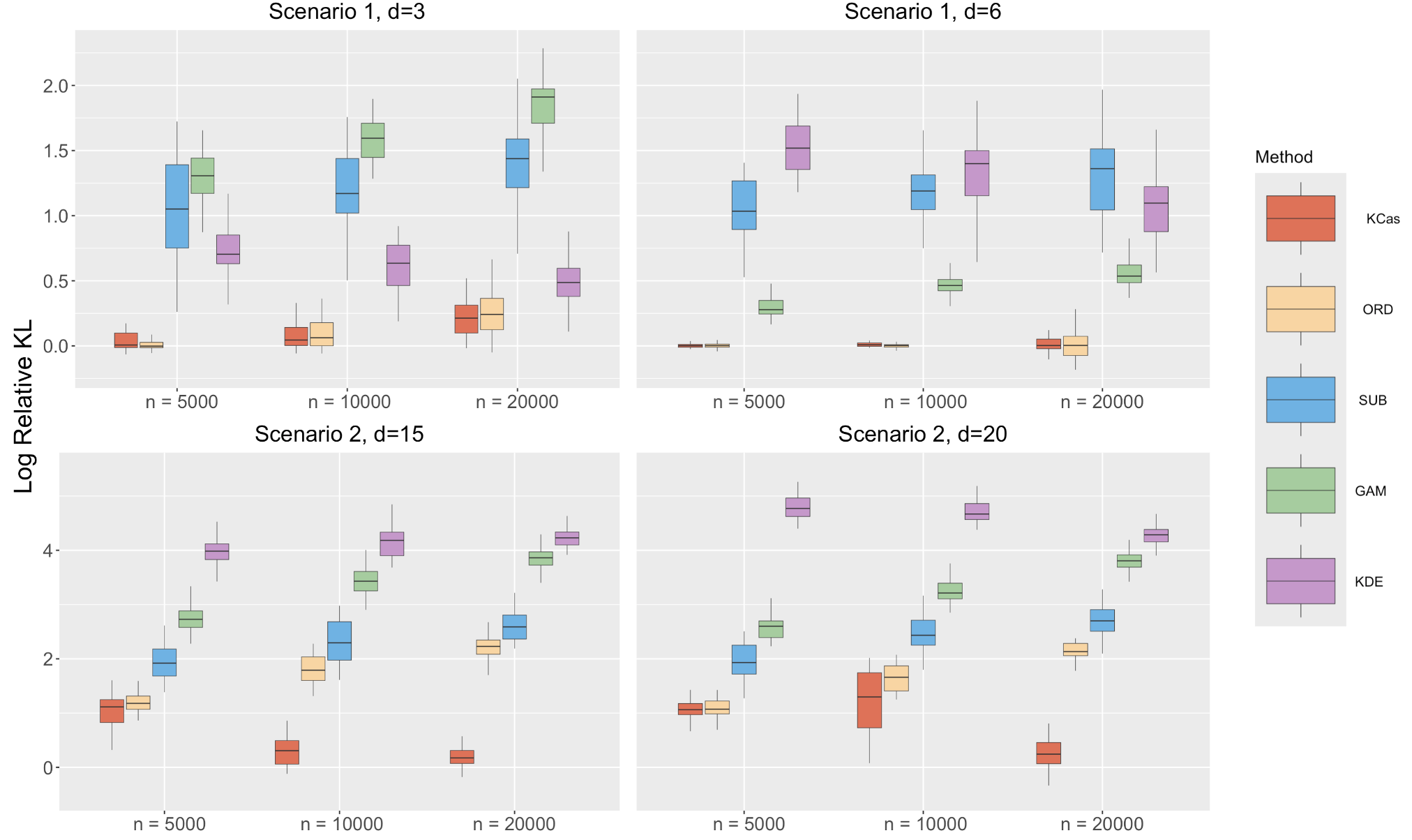}
\caption{Performance comparisons of different methods in density estimation problems using the log-transformed relative KL divergence. The lower relative KL divergence indicates better performance. Two scenarios of data generation processes are provided, each including several different settings of dimension $d$ and sample size $n$.}
\label{simu:ssden}
\end{figure}

Figure~\ref{simu:ssden} illustrates the log-transformed RelKL of KCas and the competing methods.
KCas performs favorably across the considered settings, with log-RelKL values close to or below zero in many cases. Negative values indicate that KCas outperforms the full-sample GCV benchmark in that replication. This suggests that smoothing parameters learned from the student model can effectively guide the teacher model and, in some settings, may also reduce the effect of overly adaptive full-sample tuning.

\subsection{Simulation 2: Nonparametric Regression with Smoothing Splines} \label{sec:s2}
For the nonparametric regression problem, we consider the model
\begin{equation} 
    y_i \sim \text{Ber}\left(\frac{\exp\left(\eta\left(x_i\right)\right)}{1+\exp\left(\eta\left(x_i\right)\right)}\right), \nonumber
\end{equation}
where $\text{Ber}(p)$ denotes the Bernoulli distribution with probability $p$, $x_{i}=\left(x_{i\langle 1\rangle}, \ldots, x_{i\langle d\rangle}\right)^{\mathrm{T}}$ is the $d$-dimensional predictor for the $i$th observation, with each entry independently drawn from Unif$(0,1)$. $\eta$ is the nonparametric function determining the success probability in the Bernoulli trial, and $y_{i} \in \{0,1\}$ is the response variable for the $i$th observation.

We evaluate the methods by the relative MSE, defined by
\begin{equation}
   \text{RelMSE}\left(\hat{\eta}, \eta^*, \eta\right) = \frac{\sum_{i=1}^{n}\left\{\hat{\eta}\left(x_{i}\right)-\eta\left(x_{i}\right)\right\}^{2}}  {\sum_{i=1}^{n}\left\{\eta^{*}\left(x_{i}\right)-\eta\left(x_{i}\right)\right\}^{2}} , \nonumber
\end{equation}
where $\eta$ is the true function, $\hat{\eta}$ is the estimation by the method being evaluated, and $\eta^*$ is the benchmark method. We compare KCas with the same baseline methods as in Simulation 1.

We consider two different scenarios with different dimensions. We report log-transformed RelMSE for plotting clearness, and each RelMSE is computed based on a full GCV baseline for that particular replication. 
\\
\textbf{Scenario 1:}
\begin{align*}
\eta_{m 1}(x) =
  \sum_{i=1}^3 g_1\left(x_{\langle i \rangle}\right) + g_2\left(x_{\langle 1 \rangle}, x_{\langle 2 \rangle}\right) + g_2\left(x_{\langle 1 \rangle}, x_{\langle 3 \rangle}\right) + g_3\left(x_{\langle 1 \rangle}, x_{\langle 2 \rangle}, x_{\langle 3 \rangle}\right).
\end{align*}

\noindent \textbf{Scenario 2:} 
\begin{align*}
\eta_{m 2}(x) &=
  \sum_{i=1}^3 i g_1\left(x_{\langle i \rangle}\right) + \sum_{i=4}^6 i g_{5}\left(x_{\langle i \rangle}\right) + \sum_{i=7}^9 g_{4}\left(x_{\langle i \rangle}\right) + \\ & \sum_{i=1}^3 \sum_{j>i}^4 3i g_2\left(x_{\langle i \rangle}, x_{\langle j \rangle}\right) + 6 g_2\left(x_{\langle 5 \rangle}, x_{\langle 6 \rangle}\right) + 8 g_6\left(x_{\langle 7 \rangle}, x_{\langle 8 \rangle}\right) + 10 g_3\left(x_{\langle 1 \rangle}, x_{\langle 2 \rangle}, x_{\langle 3 \rangle}\right).
\end{align*}

The explicit forms of the functions $g_i$ are provided in Appendix \ref{Appx:sim_gss}. 
Scenario 1 is a well-established setting for nonparametric multivariate functional estimation in RKHS \citep{jeon2006effective, gu1991minimizing, sun2021asymptotic, gu2003penalized}. We consider two cases, with $d=3$ and $d=6$ predictors. When $d=3$, all predictors contribute to $\eta_{m1}$ and hence to the response $y$. When $d=6$, the last three predictors are irrelevant to $\eta_{m1}$, representing a setting with redundant covariates. Scenario 2 is a more complex high-dimensional setting, where we consider $d=15$ and $d=20$.

\begin{figure}[ht]
\centering
\includegraphics[width=1\linewidth]{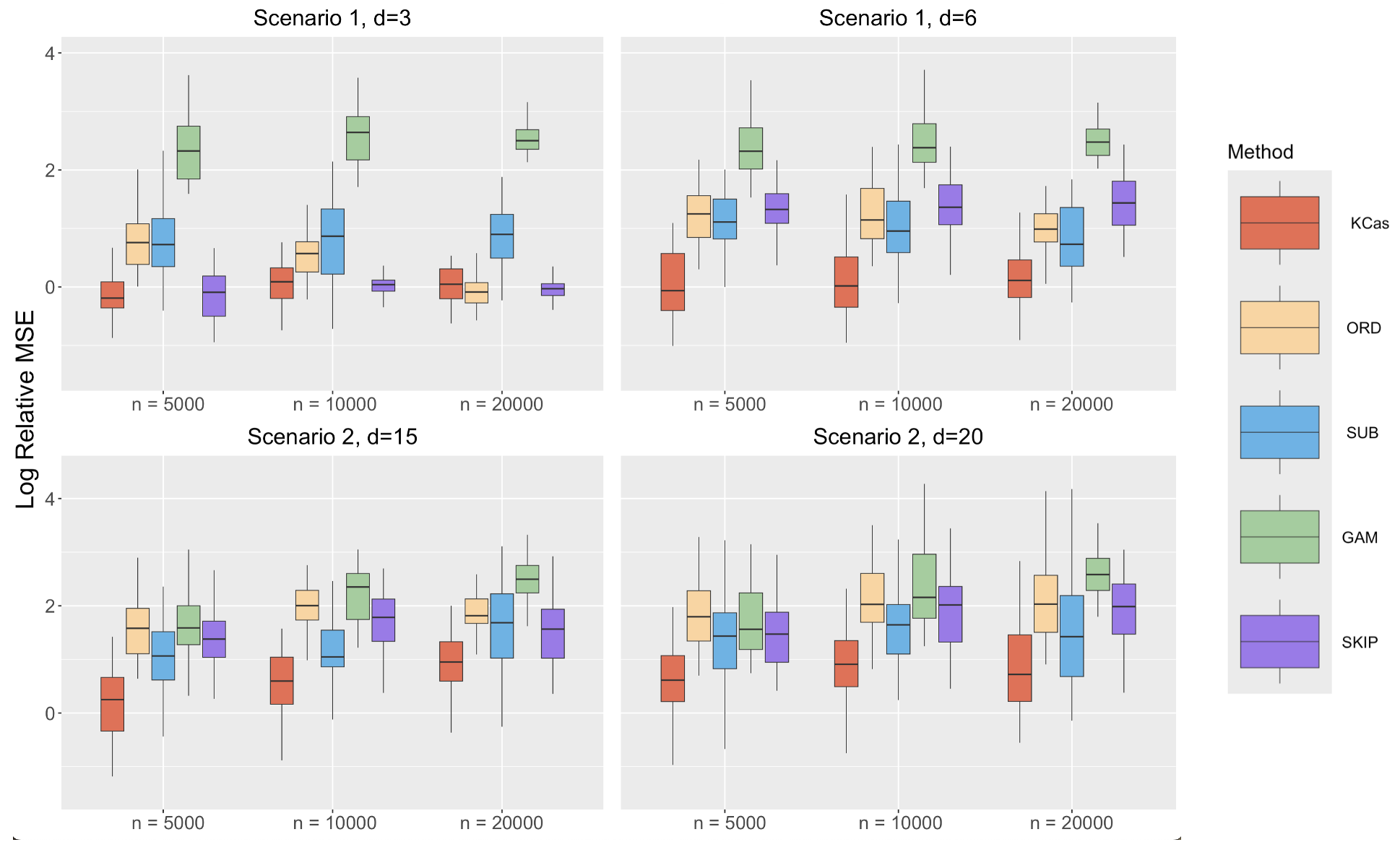}
\caption{Performance comparisons of different methods in nonparametric regression problems using log RelMSE. The lower log RelMSE indicates better performance. Two scenarios of data generation processes are provided, each including several different settings of dimension $d$ and sample size $n$.
}\label{simu:gss}
\end{figure}

\begin{table}[ht]
\centering
\vspace{6pt}
\scalebox{0.9}{
\begin{tabular}{c c c c c c c}
\toprule
& GCV & KCas & GAM &SUB & ORD   & SKIP \\
\midrule
$n=5000$ &42.2 & 6.1 & 1.3 &5.5 &16.2 & 2.1  \\
$n=10000$ & 72.9 &9.4 &2.5 & 7.0 & 23.1 & 3.3  \\
$n=20000$ & 102.5 & 8.1 &3.2 &5.0 &30.8  &3.5 \\         
\bottomrule
\end{tabular}
}
\caption{Comparison of median computational time (min) under the most difficult simulation setting (simulation 2, scenario 2, $d=20$).}
\label{table:gss_time}
\end{table}

Figure \ref{simu:gss} displays the log-transformed RelMSE relative to full-sample GCV. In Scenario 1 with $d=3$, KCas and SKIP show comparable performance and outperform the other methods. Their median log-RelMSE values are close to, and sometimes below, zero, indicating performance comparable to or better than full-sample GCV. When $d=6$, KCas remains close to full-sample GCV, whereas SKIP performs substantially worse, with median log-RelMSE values greater than $1$. This is likely because SKIP uses the starting values introduced by \citet{gu2014smoothing} as the final estimate; when $\eta$ becomes more complex, such starting values may be far from the optimum, leading to inaccurate estimation. In Scenario 2, similar patterns are observed for both $d=15$ and $d=20$, with KCas achieving the best overall performance.

To empirically compare computational efficiency, we report the running time under the most challenging simulation setting, Scenario 2 with $d=20$, in Table~\ref{table:gss_time}. KCas substantially reduces computation time compared with full-sample GCV while maintaining comparable estimation accuracy.
It is important to note that while methods such as GAM, SUB, and SKIP require less computation time than KCas, they do not achieve the same level of accuracy.

\subsection{Simulation 3: Kernel Density Estimation}
\label{sec:sim_kde}

We next conduct an additional simulation to evaluate KCas in classical kernel density estimation. This experiment is separate from the smoothing spline simulations above and uses bandwidth-selection methods as baselines. The goal is to examine whether information learned from a small subsample can be transferred to the full-sample KDE estimator through the classical $n^{-1/5}$ bandwidth scaling rule.

We consider six univariate benchmark densities widely used in KDE studies \citep[e.g.,][]{silverman2018density,jones1996survey}: 
(1) standard normal density $\mathcal{N}(0,1)$; 
(2) symmetric bimodal Gaussian mixture $0.6\,\mathcal{N}(-1,1^2)+0.4\,\mathcal{N}(2,0.5^2)$; 
(3) lognormal density $\text{Lognormal}(0,1)$; 
(4) trimodal Gaussian mixture $0.4\,\mathcal{N}(-2,0.5^2)+0.3\,\mathcal{N}(0,0.6^2)+0.3\,\mathcal{N}(3,0.8^2)$; 
(5) gamma density $\text{Gamma}(2,1.5)$; and 
(6) skewed bimodal Gaussian mixture $0.7\,\mathcal{N}(-1,1^2)+0.3\,\mathcal{N}(3,0.5^2)$. 
For each density, we generate samples of size $n \in \{200, 500, 1000, 2000, 5000\}$. For a given density and $n$, we draw $50$ independent replications and, for each replication, compute several bandwidth selectors and their associated KDEs.

We compare the following methods:
\begin{itemize}
\item[(i)] \textbf{Oracle AMISE}: the theoretical AMISE-optimal bandwidth $h_{\mathrm{AMISE}}$ obtained by numerically evaluating $R\!\big(f''\big)$ from the known $f$. This serves as a lower bound on the achievable mean integrated squared error (MISE).
\item[(ii)] \textbf{Improved Sheather-Jones (ISJ)}: the diffusion-based plugin selector of \citet{botev2010kernel}, which refines the original Sheather--Jones solve-the-equation method \citep{sheather1991reliable} and is known to perform well across a wide range of densities.
\item[(iii)] \textbf{Least-squares cross-validation (LSCV)}: the classical least-squares cross-validation bandwidth \citep{rudemo1982empirical,bowman1984alternative}, which minimizes an unbiased estimate of the integrated squared error.
\item[(iv)] \textbf{Silverman's rule of thumb}: the normal-reference bandwidth $h_{\mathrm{S}} = 0.9\min\{\hat{\sigma},\mathrm{IQR}/1.34\}$ $n^{-1/5}$ \citep{silverman2018density}, where $\hat{\sigma}$ is the sample standard deviation and $\mathrm{IQR}$ is the sample interquartile range, defined as the difference between the third and first sample quartiles, $\mathrm{IQR} = Q_{3} - Q_{1}$.
\item[(v)] \textbf{KCas-Silverman}: a KCas estimator based on the Silverman rule computed on a uniform subsample of size $b=200$, i.e.\ $h_{\mathrm{KCas},S}(n;b)$ obtained by plugging $h_{\mathrm{AMISE}}^{\mathrm{sub}}(b) = h_{\mathrm{S}}^{\mathrm{sub}}(b)$ into \eqref{eq:KCas_KDE}.
\item[(vi)] \textbf{KCas-ISJ}: a KCas estimator based on the ISJ selector computed on a subsample of size $b=200$, i.e.\ $h_{\mathrm{KCas},\mathrm{ISJ}}(n;b)$ from \eqref{eq:KCas_KDE} with $h_{\mathrm{AMISE}}^{\mathrm{sub}}(b)$ taken as the subsample ISJ bandwidth.
\item[(vii)] \textbf{KCas-CV}: a KCas estimator based on least-squares cross-validation (LSCV) computed on a subsample of size $b=200$, i.e.\ $h_{\mathrm{KCas},\mathrm{CV}}(n;b)$ obtained by plugging the subsample LSCV bandwidth $h_{\mathrm{CV}}^{\mathrm{sub}}(b)$ into \eqref{eq:KCas_KDE}.
\end{itemize}
In each method, we fit a Gaussian KDE on an equally spaced grid of $4096$ points covering the effective support of each density. The integrated squared error (ISE) for a replication is approximated via the trapezoidal rule,
\[
\mathrm{ISE}(\hat{f}_h) \approx \sum_{k} \big\{\hat{f}_h(x_k) - f(x_k)\big\}^2 \,\Delta x,
\]
and we summarize performance using the median ISE and interquartile range across the $50$ replications. To facilitate comparison, we also report the ratio of the median ISE to that of the oracle AMISE bandwidth for each density and $n$, as well as log--log slopes of median bandwidth versus sample size to check the $n^{-1/5}$ scaling.

\begin{figure}[h]
    \centering
    \includegraphics[width=1\linewidth]{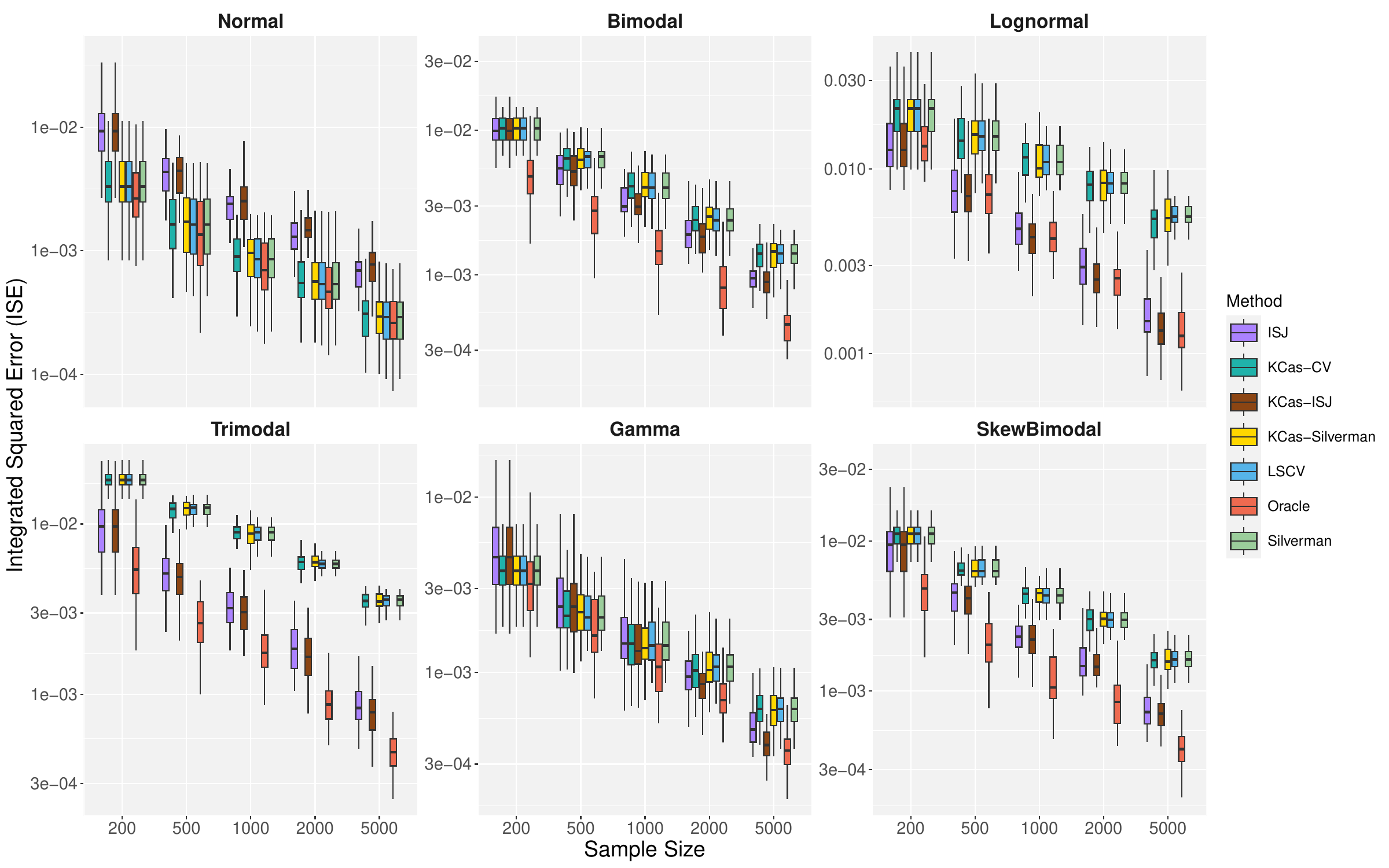}
    \caption{KCas for Kernel Density Estimation. Integrated squared error of KDE bandwidth selection methods across six benchmark densities and five sample sizes. Lower values indicate better performance.}
    \label{fig:kde_amise}
\end{figure}

Figure~\ref{fig:kde_amise} displays the distribution of ISE values across replications for all selectors, stratified by density and sample size. As expected, the oracle AMISE bandwidth achieves the smallest median ISE in every setting and provides a lower bound for the other procedures. Among implementable full-sample selectors, the diffusion-based ISJ bandwidth is consistently the strongest competitor, with median ISEs very close to the oracle across all $n$ and for both simple (normal, gamma) and more complex (bimodal, trimodal, skewed) densities, in line with previous empirical studies \citep{sheather1991reliable,jones1996survey,botev2010kernel}. Silverman's rule-of-thumb performs competitively for the unimodal, approximately Gaussian densities, but becomes noticeably suboptimal for heavier-tailed and highly multimodal shapes, where its normal-reference assumption leads to oversmoothing. The LSCV selector exhibits the largest variability, especially for smaller sample sizes, with wider ISE distributions and occasional outlying bandwidths, reflecting its well-known finite-sample instability \citep{rudemo1982empirical,bowman1984alternative,jones1996survey}.

The KCas variants largely inherit the strengths and weaknesses of their corresponding base selectors while operating on only a small subsample. For the normal and gamma densities, KCas-Silverman attains median ISE ratios on the order of $1.05$–$1.20$ relative to the oracle across the range of $n$, indicating only a modest loss in efficiency despite learning bandwidths from $b=200$ observations. For multimodal and skewed densities, KCas-ISJ is more adaptive and typically dominates KCas-Silverman, with median ISE curves that nearly overlap those of the full-sample ISJ selector. Across all densities, the variability of KCas-ISJ remains close to that of ISJ itself, suggesting that the additional randomness induced by subsampling does not substantially degrade performance. KCas-CV performs reasonably well in smooth unimodal settings, particularly for the normal density, but is generally less competitive for heavier-tailed and multimodal densities and exhibits larger variability, reflecting the weaker finite-sample stability of cross-validation based selectors. Overall, KCas-Silverman and KCas-ISJ consistently outperform KCas-CV, benefiting from more structured statistical foundations and stronger theoretical guarantees.

From the computational perspective, the cost of KCas is dominated by fitting the student model on $b=200$ points, rendering the bandwidth selection cost essentially independent of $n$ once $b$ is fixed. Taken together, these results demonstrate that, in the KDE setting, KCas provides a practical mechanism for transferring the adaptivity of sophisticated selectors such as ISJ to large-sample regimes, while maintaining optimal scaling and substantially reducing the computational burden of hyperparameter tuning.

Simulation results across various settings and methods show that KCas can effectively transfer knowledge of smoothing parameters from the student model on the subsample to the teacher model on the full sample, consistent with our theoretical analysis.

\section{Real Data Analysis}

We apply KCas to real datasets in two settings: smoothing-spline-based nonparametric estimation and deep learning hyperparameter transfer. Specifically, we consider density estimation and nonparametric regression on benchmark datasets, as well as image classification on CIFAR-10.

\subsection{Nonparametric Estimation}

We first evaluate KCas on smoothing-spline-based nonparametric estimation tasks. The density estimation study uses four benchmark datasets, and the nonparametric regression study uses five benchmark datasets. Details of the datasets are provided in Appendix \ref{data}.
Each dataset is randomly split into 80\% training and 20\% testing sets. Features are scaled using a min--max transformation fitted on the training set and then applied to the testing set. All relative metrics in this subsection are computed with respect to full-sample GCV.

\vspace{3pt}

\noindent \textbf{Density estimation.}
We compare KCas with GAM, SUB, ORD, and KDE, in density estimation. Since there is no ground truth for the density function, we evaluate the performance based on the average log-likelihood on the test set, as suggested by previous studies \citep{papamakarios2017masked, gao2022adaptive}. 
We consider the ANOVA decomposition of $\eta$ including all main effects and all two-way interactions. The model terms are selected using the model diagnosis suggested by \citet{gu2004model}. Our results, presented in Table \ref{table:r1}, demonstrate that KCas outperforms all four benchmark methods in terms of log-likelihood across all data sets. Notably, on the ESC and MFCC datasets, KCas even outperforms GCV on the full sample, both in terms of log-likelihood and computational time.
\begin{table}[t]
\centering
\vspace{6pt}
\scalebox{0.9}{
\begin{tabular}{c|c c c c c|c c c c c}
\toprule
 \multicolumn{1}{c}{ }&  \multicolumn{5}{c}{Relative log-likelihood} & \multicolumn{5}{c}{Relative computation time}\\
\toprule
Method& KCas& GAM & SUB & ORD& KDE & KCas& GAM & SUB & ORD& KDE\\
\midrule
CD14  & \textbf{0.9998}   & 0.8095 & 0.7642 & 0.9990 & 0.8342   &0.86 &0.84 & 0.26  &0.48   &7.40\\   
AReM  &\textbf{0.9995}    & 0.9657 & 0.9823& 0.9978 & 0.9568    &0.73 & 0.82 & 0.10 & 0.24  & 1.02\\
ESC  &\textbf{1.0475}    &0.5514  & 1.0369& 1.0191 & 0.2503    &0.53 &  0.87 & 0.42 & 0.44  & 0.69\\
MFCC &\textbf{1.0054}   & 0.9572 & 0.9908& 0.9988 &  0.2528   &0.24 & 0.20  &0.19 &  0.19 &1.73\\
\bottomrule
\end{tabular}
} 
\caption{Relative log-likelihood and relative computational time for different methods in density estimation with real data. GCV is taken as the benchmark method. Higher relative log-likelihood indicates better performance. The highest relative log-likelihood values for each dataset are marked in bold.}
\label{table:r1}
\end{table}

\vspace{3pt}
\noindent \textbf{Nonparametric regression. }
We compare KCas with GAM, SUB, ORD, and SKIP, in nonparametric regression for exponential families. Since we do not know the underlying probability of each data point, as suggested by \citet{wang2018optimal}, we calculate RelMSE over GCV. The main effect and interaction terms are selected using the model diagnosis suggested by \citet{gu2004model}. 
Table \ref{table:r2} shows that KCas outperforms all four benchmark methods in terms of MSE. Although KCas is not the fastest among the methods, it is faster than the full sample estimator in all studies, while obtaining the best performance among the comparison methods.

\begin{table}[t]
\centering
\vspace{6pt}
\scalebox{0.9}{
\begin{tabular}{c|c c c c c|c c c c c}
\toprule
 \multicolumn{1}{c}{ }&  \multicolumn{5}{c}{Relative MSE} & \multicolumn{5}{c}{Relative computation time}\\
\toprule
Method & KCas        &GAM    &SUB    &ORD   &SKIP    &KCas &GAM  &SUB  &ORD  &SKIP\\ \midrule
SUSY   & \bf{0.7963} &0.8333 &1.2185 &0.9385 &0.8252 &0.15 &0.09 &0.11 &0.01 &0.09\\
WFRN   & \bf{1.0434} &1.0535 &1.3604 &1.1071 &1.0827 &0.41 &0.02 &0.21 &0.01 &0.03\\
OCUP & \bf{0.9999} &4.1504 &2.0734 &1.0055 &1.0010 &0.10 &0.16 &0.03 &0.07 &0.13\\
SHILL & \bf{0.9825} & 0.9839 & 1.0218 & 1.0217 & 1.0162 & 0.27 & 0.03 & 0.26 & 0.01 & 0.01 \\
CIFAR-10 & \textbf{1.0040} & 1.0710 & 1.0513 & 1.0264 & 1.0171  & 0.53 & 0.06 & 0.21 & 0.01 & 0.03 \\
\bottomrule
\end{tabular}
}
\caption{Relative MSE and relative computational time for different methods in nonparametric regression with real data. GCV is taken as the benchmark method. Lower relative MSE indicates better performance. The lowest relative MSE values for each dataset are marked in bold.}
\label{table:r2}
\end{table}

\subsection{Image Classification with Deep Learning }\label{sec_real:dl}

We next apply KCas to deep learning hyperparameter transfer on the CIFAR-10 dataset \citep{krizhevsky2009learning}. CIFAR-10 contains $50,000$ training images and $10,000$ test images from ten evenly represented object classes: airplane, automobile, bird, cat, deer, dog, frog, horse, ship, and truck.

\noindent\textbf{Model Structure.}
We consider two student--teacher pairs. MobileNetV2 with 2.2M parameters \citep{sandler2018mobilenetv2} and ResNet18 with 11.7M parameters \citep{he2016deep} serve as student models, while ResNet50 with 25.6M parameters serves as the teacher model. All architectures are adapted to the CIFAR-10 resolution by removing max-pooling and replacing the initial layers with a $3\times3$ convolution. Standard dataset normalization is applied.

\begin{table}[ht]
\centering
\vspace{6pt}
\scalebox{0.9}{
\begin{tabular}{c c c c }
\toprule
& Student 1 & Student 2 & Teacher  \\
\midrule
Model Structure & MobileNetV2 & ResNet18 & ResNet50   \\
Number of Parameters & 2.2M & 11.7M & 25.6M     \\      
\bottomrule
\end{tabular}
}
\caption{Model structure and model size.}
\label{table:dl_model}
\end{table}

\noindent\textbf{KCas Hyperparameter Search.} 
To obtain student configurations for transfer, each student model is tuned by grid search over $12$ hyperparameter combinations: batch sizes ${128,256,512}$, learning rates ${0.05,0.1}$, and weight-decay values ${5\mathrm{e}{-4},1\mathrm{e}{-3}}$. Students are trained for $200$ epochs with SGD using cosine decay and a 5\% warmup. The best configuration is selected using a 10\% holdout validation set. KCas then scales the selected learning rate to the teacher model.

\noindent\textbf{Baselines.}
We compare KCas with three baselines. \textbf{(1) Student model}: the accuracy of the tuned student model. \textbf{(2) Cookbook}: hand-selected hyperparameters commonly recommended for the teacher architecture. \textbf{(3) Retune}: a full grid search for the teacher over $36$ combinations: batch sizes ${256,512,1024}$, learning rates ${0.02,0.05,0.1,0.2}$, and weight-decay values ${1\mathrm{e}{-4},5\mathrm{e}{-4},1\mathrm{e}{-3}}$. This baseline represents the standard but computationally expensive approach to teacher hyperparameter tuning.

\noindent\textbf{Results.}
For each method, we record test accuracy and training time, averaged over three repetitions. 

\begin{figure}[ht]
\centering
\includegraphics[width=0.99\linewidth]{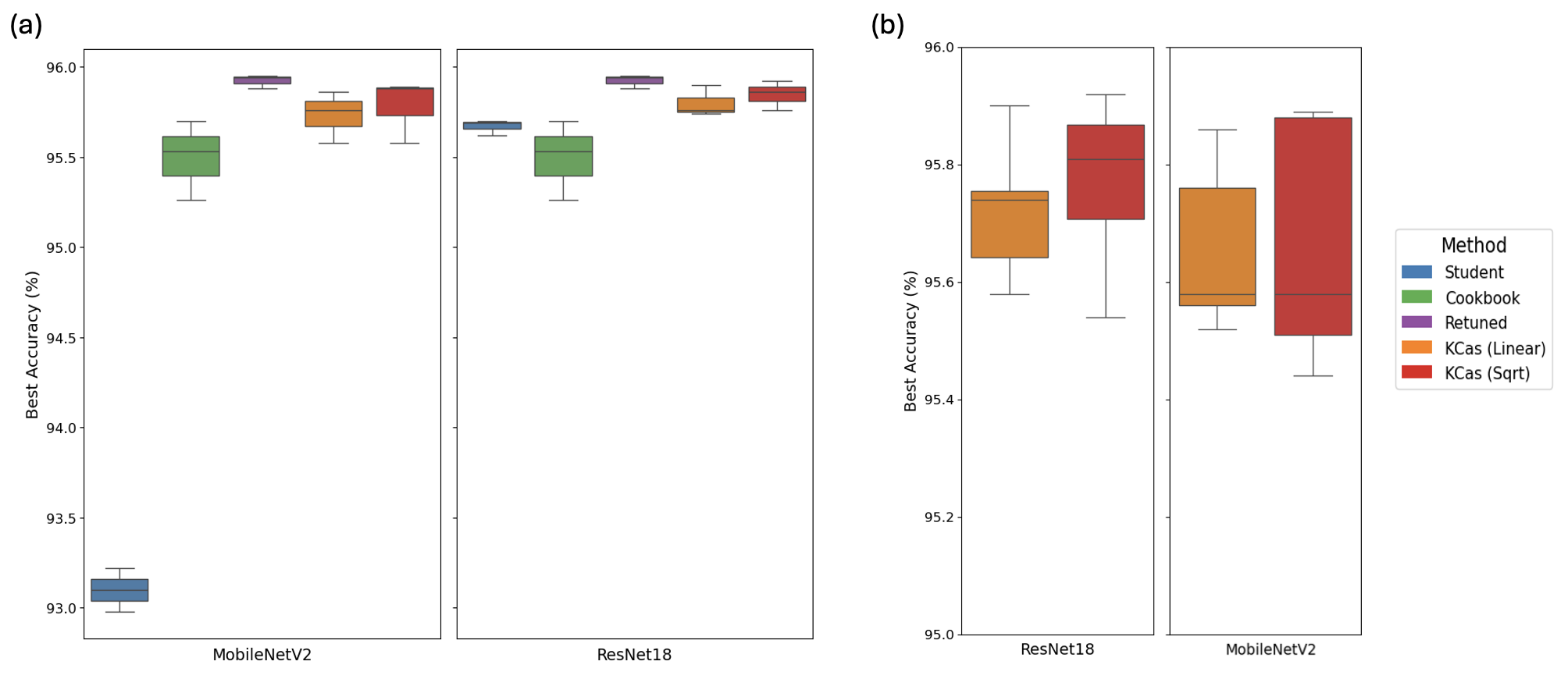}
\caption{KCas for deep learning. We evaluate KCas for transferring learning-rate schedules from a compact student network to a larger teacher network. Two student architectures, MobileNetV2 and ResNet18, are used to guide the training of teacher models. (a) Accuracy comparisons among baselines. (b) Direct comparison of the two KCas scaling rules.
}\label{plot:dl}
\end{figure}

Figure \ref{plot:dl} shows that KCas can effectively transfer learning-rate information from smaller networks to a larger teacher network, achieving teacher performance competitive with full hyperparameter re-tuning. Panel (a) compares KCas with the baselines. Both KCas variants, using linear and square-root scaling, substantially improve over the student models and reach accuracy levels close to those of fully re-tuned teachers for both MobileNetV2 and ResNet18. The cookbook rule provides moderate gains but consistently falls short of the proposed KCas approaches.
An additional pattern in panel (a) is that teacher performance is higher when ResNet18 is used as the student rather than MobileNetV2. This likely reflects the stronger standalone performance of ResNet18, which allows it to provide more informative hyperparameters for transfer. The results suggest that the student model should not be too weak in practice. Although MobileNetV2-based transfer still outperforms the cookbook baseline, a stronger student yields clearer benefits for the teacher.

Panel (b) directly compares the two KCas scaling rules across all batch-size choices, rather than only the best setting. Linear and square-root scaling achieve similar accuracy overall, while the square-root rule shows slightly greater stability across repetitions. Together with panel (a), these results suggest that square-root scaling tends to perform better in this experiment.

\begin{table}[ht]
\centering
\vspace{6pt}
\scalebox{0.9}{
\begin{tabular}{c| c c c c c}
\toprule
Time (h) & Student & KCas & Cookbook & Retune  \\
\midrule
MobileNetV2 & 3.5 & 6.9 & 3.4 & 121.1  \\     
ResNet18 & 18.5 & 21.8 & 3.4 & 121.1  \\ 
\bottomrule
\end{tabular}
}
\caption{Comparison of median computational time (hours) for the image classification task.}
\label{table:dl_time}
\end{table}

Table \ref{table:dl_time} reports the computational time. Compared with full re-tuning of the teacher model, which requires more than $120$ hours, KCas uses only a small fraction of the computational budget. The savings are more pronounced with a smaller student model, since the total cost of KCas depends on the student-training time. In practical scenarios where a trained student model is already available, the additional cost of KCas reduces to training a single teacher model, making it comparable to the cookbook baseline. Although KCas is slightly more expensive than the cookbook approach when the student must be trained from scratch, it consistently achieves better performance. These results illustrate that KCas offers a favorable balance between efficiency and predictive performance for deep learning hyperparameter transfer.


\section{Discussion}
\label{sec:conc}

In this paper, we propose Knowledge Cascade (KCas), a student-to-teacher knowledge transfer framework developed primarily for nonparametric functional estimation in RKHS. In this setting, a student model fitted on a small subsample guides smoothing-parameter selection for the full-sample teacher model, reducing tuning cost while retaining strong statistical performance. Our simulation and real-data results show that KCas compares favorably with several computationally efficient alternatives and can even outperform full-sample GCV in some settings. Beyond the main RKHS setting, we further illustrate the same student-to-teacher transfer principle through kernel density estimation and deep learning hyperparameter transfer. These examples suggest that KCas is not limited to smoothing spline models, but can provide a broader strategy for scalable model development whenever useful information learned from a smaller model can be transferred to a larger one. More broadly, KCas offers a new perspective on knowledge distillation by showing that small models can sometimes guide, rather than merely approximate, larger models. It remains an open question whether such knowledge cascades can be constructed without the support of asymptotic or empirical scaling rules.

\noindent \textbf{When and why to use KCas.}
KCas is most useful when a smaller student model can learn information that remains informative for developing a larger teacher model. The key requirement is not that the student approximates the teacher perfectly, but that it captures transferable information that can reduce the cost of training or tuning the teacher. Such information may include smoothing parameters, bandwidths, training configurations, or other structural features that are expensive to learn directly at full scale. When this information follows a suitable asymptotic or empirical scaling relationship, KCas provides a way to extract it from a low-cost student model and propagate it to the teacher.

KCas is not intended to replace the teacher with a smaller approximation. Instead, the student serves as a computationally efficient guide for building the teacher. In this paper, we instantiate this idea mainly through hyperparameter transfer. For example, in smoothing spline ANOVA models, KCas learns smoothing parameters from a subsample-based student model and transfers them to the full-sample teacher model through an asymptotic scaling rule. Thus, KCas avoids repeated full-scale tuning while still fitting the final model on all observations. More broadly, the same principle suggests a general strategy for reducing the cost of teacher-model development whenever useful information can be learned cheaply from a smaller model and reliably transferred to a larger one.

\noindent \textbf{KCas can outperform the full sample estimator.}
In multiple scenarios in our simulation and real data analysis, we have observed an interesting phenomenon that KCas, based on a random subsample, can outperform the full sample GCV estimator. One potential explanation is that the statistical theory underlying KCas helps make hyperparameter transfer more efficient. Although this finding appears to conflict with the traditional intuition, 
recent studies \citep{nakkiran2021deep,guo2022deepcore,yang2022dataset,sorscher2022beyond,gadre2024datacomp}, have observed similar phenomena that models trained on subsamples sometimes can achieve better performance in the context of deep models. 
Further, a related paper \citep{kolossov2023towards} observed such a phenomenon for simpler settings under empirical risk minimization, and they aim to develop some theoretical justification for it. Although they work under different model settings, it is an interesting future direction to explore this phenomenon theoretically and empirically in nonparametric estimation or deep learning settings.

\noindent \textbf{Connection to broader statistical methodology.}
The KCas principle may also be relevant to other statistical methods whose effective complexity changes with sample size. Examples include basis-expansion methods, wavelet estimators, series regression, and other multiresolution procedures. In these settings, quantities such as the number of basis terms, resolution levels, thresholding parameters, or regularization strengths often need to be adjusted as the sample size increases. This makes them natural candidates for KCas-type extensions, since a student model fitted at a smaller scale may help estimate problem-specific quantities that guide the corresponding full-scale estimator. Establishing suitable scaling rules and theoretical guarantees for these broader classes of methods remains an important direction for future work.

\noindent \textbf{Code availability.} The source code is available at: \url{https://github.com/LuyangFang/KCas}.

\acks{This work was partially supported by the U.S. National Science Foundation under grants  DMS-1903226, DMS-1925066, DMS-2124493, DMS-2311297, DMS-2319279, DMS-2318809, and by the U.S. National Institutes of Health under grant R01GM152814. The authors declare no competing interests.
}


\newpage
\appendix


\numberwithin{equation}{section} 

\section{Existence of the Minimizer}\label{Exist}

The following theorem guarantees the existence of the minimizer of (\ref{PLik}) in RKHS.
\begin{appendixtheorem}[Existence of the minimizer,  \citet{wahba1990spline}]
Suppose $L(\eta)$ is a continuous and convex functional in a Hilbert space $\mathcal{H}$ and $J(\eta)$ is a square (semi) norm in $\mathcal{H}$ with a null space $\mathcal{N}_{J}$, of finite dimension. If $L(\eta)$ has a unique minimizer in $\mathcal{N}_{J}$, then $L(\eta)+ \frac{\lambda}{2}J(\eta)$ has a minimizer in $\mathcal{H}$.
\label{thm:existence}
\end{appendixtheorem}

When $L(\eta)$ is the negative log-likelihood function, it is usually convex in $\eta$. The quadratic functional $J(\eta)$ is also convex \citep{gu1993smoothing}.  A minimizer of $L(\eta)$ is unique in $\mathcal{N}_{J}$ if the convexity is strict in it, which is often the case. Thus, the solution for Equation (\ref{PLik}) exists in most cases.

\section{Minimizer of the Penalized Loss Functional}\label{Minimizer}

Consider exponential family distributions with densities of the form
\begin{equation}
    f(y \mid x)=\exp \{\frac{y \vartheta(x)-b(\vartheta(x))}{a(\phi)}+c(y, \phi)\},
\end{equation}
where $a>0, b$, and $c$ are known functions, $\vartheta(x)$ is the canonical parameter dependent on a covariate $x$, and $\phi$ is either known or considered as a nuisance parameter that is independent of $x$. 
Fixing the smoothing parameters, the penalized likelihood functional (\ref{eq:PL_reg}) is strictly convex in $\eta$. Thus, given the current $\Tilde{\eta}$, the Newton iteration can be used to update $\Tilde{\eta}$ by the minimizer of the penalized weighted least square functional 

\begin{equation}
    \frac{1}{n} \left( \tilde{\bsY} - S \mathbf{d} -Q \mathbf{c} \right)^T W \left( \tilde{\bsY} - S \mathbf{d} -Q \mathbf{c} \right) + \frac{\lambda}{2} \mathbf{c}^{T} Q \mathbf{c},
\end{equation}
where $\tilde{Y}_i=\tilde{\eta}\left(x_i\right)- \frac{\tilde{u}_i} {\tilde{w}_i}$,
$\tilde{w}_i=\ddot{b}\left(\tilde{\eta}\left(x_i\right)\right), \text{and } \tilde{u}_i=-Y_i+\dot{b}\left(\tilde{\eta}\left(x_i\right)\right)$. 
Here $\tilde{w}_i$ is the $i$th diagonal element of the matrix $W$, $\tilde{Y}_i$ is the $i$th element of $\tilde{\bsY}$, $\dot{b}$ and $\ddot{b}$ are the first and second derivatives of the function $b$.


\section{Example of the Tensor Product Cubic Spline on $[0,1]^2$}
For the univariate $\eta$ on $\mathcal{X}$, the most popular choice of the smoothness penalty $J(\eta)$ is
   \begin{equation}
   J(\eta,\eta) = \int_{\mathcal{X}} \left(\eta^{(m)}\right)^2 \mathrm{~d} x,
   \end{equation}
   where $\eta^{(m)}=\mathrm{d}^m \eta / \mathrm{d} x^m$. A cubic estimator of the minimizer of (\ref{PLik}) is obtained by setting $m=2$. This idea can be extended to multivariate settings. 
   Consider the ANOVA decomposition (\ref{eq:decomposition}). $\mathcal{H}$ can be decomposed into the space of constants, the spaces of main effects, and the corresponding spaces of interaction terms lying in the tensor product space of the interacting main-effect spaces. For the two-dimensional problem, one has the following space decomposition in each variable (\citet{gu2013smoothing}, section 2.3):
   \begin{equation}
       \begin{aligned}
        \left\{\eta: \eta^{(2)} \in \mathcal{L}_2[0,1]\right\}= & \{\eta: \eta \propto 1\} \oplus\left\{\eta: \eta \propto k_1\right\} \\
        & \oplus\left\{\eta: \int_0^1 \eta \mathrm{~d} x=\int_0^1 \eta^{(1)} \mathrm{d} x=0, \eta^{(2)} \in \mathcal{L}_2[0,1]\right\} \\
        = & \mathcal{H}_{00} \oplus \mathcal{H}_{01} \oplus \mathcal{H}_1,
        \end{aligned}
   \end{equation}
   where $k_1(x)=x-0.5$. The space of constant terms is $\mathcal{H}_{00\langle1\rangle} \otimes \mathcal{H}_{00\langle2\rangle}$; the space of main effects is spanned by $\mathcal{H}_{00\langle1\rangle} \otimes\left(\mathcal{H}_{01\langle2\rangle} \oplus \mathcal{H}_{1\langle2\rangle}\right)$ and 
   $\mathcal{H}_{00\langle 2\rangle} \otimes\left(\mathcal{H}_{01\langle 1\rangle} \oplus \mathcal{H}_{1\langle1\rangle}\right)$; 
   and the subspace $\left(\mathcal{H}_{01\langle 1\rangle} \oplus \mathcal{H}_{1\langle 1\rangle}\right) \otimes$ $\left(\mathcal{H}_{01\langle2\rangle} \oplus \mathcal{H}_{1\langle2\rangle}\right)$ spans the space of interactions. Let $\mathcal{H}_{\nu, \mu}=\mathcal{H}_{\nu\langle 1\rangle} \otimes \mathcal{H}_{\mu\langle 2\rangle}$ for $\nu, \mu=00,01,1$, with inner products $(\eta, \eta)_{\nu, \mu}$ and reproducing kernels $R_{\nu, \mu}=R_{\nu\langle 1\rangle} R_{\mu\langle 2\rangle}$, using the tensor product cubic spline, we have 
   \begin{equation}
       \begin{aligned}
        J(\eta, \eta)= & \theta_{1,00}^{-1}(\eta, \eta)_{1,00}+\theta_{00,1}^{-1}(\eta, \eta)_{00,1} \\
        & +\theta_{1,01}^{-1}(\eta, \eta)_{1,01}+\theta_{01,1}^{-1}(\eta, \eta)_{01,1}+\theta_{1,1}^{-1}(\eta, \eta)_{1,1} ,
        \end{aligned}
   \end{equation}
   and the null space of $J(\eta, \eta)$ is 
   \begin{equation}
       \mathcal{N}_J=\mathcal{H}_{00,00} \oplus \mathcal{H}_{01,00} \oplus \mathcal{H}_{00,01} \oplus \mathcal{H}_{01,01} .
   \end{equation}

\section{Regularity Conditions}\label{condi}



We define the quadratic functional representing the mean square error of the estimator $\hat{\eta}$ in estimating the target function $\eta_0$ on the domain $\mathcal{X}$ as
\begin{equation*}
    V\left(\hat{\eta}-\eta_0 \right)=\int_{\mathcal{X}}\left\{\hat{\eta}-\eta_0(x)\right\}^{2} f(x) \mathrm{d} x,
\end{equation*}
where $f(x)$ is the marginal density of $x$.
We now state some regularity conditions for Theorem \ref{thm:converge_den} and Theorem \ref{thm:converge}.

\begin{appcondition}\label{cond:continuous}
The functional V is completely continuous with respect to J.
\end{appcondition}
When condition \ref{cond:continuous} is satisfied, that is, $V$ is completely continuous with respect to $J$ and hence to $V+J$, there exist eigenvalues $\lambda_{\nu}$ and the corresponding eigenfunctions $\psi_{\nu}$ such that
\begin{equation*}
\begin{aligned}
    &V\left(\psi_{\nu}, \psi_{\mu}\right)=\lambda_{\nu} \delta_{\nu, \mu}, \text{ and } \\
    &(V+J)\left(\psi_{\nu}, \psi_{\mu}\right)=\delta_{\nu, \mu},
\end{aligned}
\end{equation*}
where $\delta_{\nu, \mu}$ is the Kronecker delta and $1 \geq \lambda_{\nu} \downarrow 0$; see \cite{weinberger1974variational},
\cite{silverman1982est}. 

Write $\phi_{\nu}=\lambda_{\nu}^{-\frac{1}{2}} \psi_{\nu}$. It follows that
\begin{equation*}
\begin{aligned}
    V\left(\phi_{\nu}, \phi_{\mu}\right)&=\delta_{\nu, \mu},\\
    J\left(\phi_{\nu}, \phi_{\mu}\right)&=\rho_{\nu} \delta_{\nu, \mu},
\end{aligned}
\end{equation*}
where $0 \leq \rho_{\nu}=\lambda_{\nu}^{-1}-1$. We refer to $\rho_{\nu}$ as the eigenvalues of $J$ with respect to $V$ and to $\phi_{\nu}$ as the associated eigenfunctions. A Fourier series expansion of $\eta_0$ satisfying $J(\eta_0)<\infty$ is $\eta_0=\sum_{\nu} \eta_{\nu,0} \phi_{\nu}$, where $\eta_{\nu,0}=V\left(\eta_0, \phi_{\nu}\right)$ are the Fourier coefficients. 

\begin{appcondition}\label{cond:summation}
    $\sum_{\nu} \rho_{\nu}^{p} \eta_{\nu, 0}^{2}<\infty$ for some $p \in[1,2]$.
\end{appcondition}

\begin{appcondition}\label{cond:eigenvalue}
For $\nu$ sufficiently large and some $\beta>0$, the eigenvalues $\rho_{\nu}$ of $J$ with respect to $V$ satisfy $\rho_{\nu}>\beta \nu^{r}$, where $r>1$.
\end{appcondition}

\begin{appcondition} \label{cond:equi_den}
For $\eta$ in a convex set $B_0$ around $\eta_0$ containing $\hat{\eta}$ and $\tilde{\eta}$, where $\Tilde{\eta}$ is a linear approximation of $\hat{\eta}$, $c_1 V(f) \leq V_\eta(f)$ holds uniformly for some $c_1>0$.
\end{appcondition}

\begin{appcondition} \label{cond:bound}
$\operatorname{Var}\left[\phi_{\nu}(X) \phi_{\mu}(X) w\left(\eta(X), Y\right)\right] \leq c_{3}$ for some $c_{3}<\infty$, $\forall \nu, \mu$.
\end{appcondition}
Condition \ref{cond:bound} requires a uniform bound for the fourth moments of $\phi_{\nu}(X)$.

\begin{appcondition}\label{cond:equi}
Let $w\left(\eta; Y\right) = \frac{d^2 l}{d \eta^2}$, where $l\left( \eta; Y \right)$ is the minus log likelihood of $\eta$ with observations $Y$.
For $\tilde{\eta}$ in a convex set $B_{0}$ around $\eta_0$ containing $\hat{\eta},\ c_{1} w\left(\eta_0(x) ; Y\right) 
\leq w(\tilde{\eta}(x) ; Y) 
\leq c_{2} w\left(\eta_0(x) ; Y\right)$ holds uniformly for some $0<c_{1}<c_{2}<\infty, \forall x \in \mathcal{X}, \forall Y$.
\end{appcondition}
Condition \ref{cond:equi} requires the equivalence of the information in $B_0$.
\section{Proofs of Main Results}\label{proof}

\begin{proof}[Proof of Theorem \ref{thm:converge_den}]
We start by summarizing the notations used in the theorem. $\eta_0$ is the true function.
$\hat{\eta}$ is the estimation based on $\lambda_{\mathrm{KCas}}^{\mathrm{full}}(n;b) $. 
For the sake of simplicity, we write $r=2m$ in the proof. Recall that 
\begin{equation*}
     \lambda_{\mathrm{KCas}}^{\mathrm{full}}(n;b) 
     = \lambda_{\mathrm{GCV}}^{\mathrm{sub}}(b)
     (\frac{n}{b})^{-\frac{r}{rp+1}}.
\end{equation*}
It suffices to show that as $n \rightarrow \infty$, $$\lambda_{\mathrm{KCas}}^{\mathrm{full}}(n;b)  \rightarrow 0, \text{ and }$$
$$n (\lambda_{\mathrm{KCas}}^{\mathrm{full}}(n;b) )^{\frac{1}{r}} \rightarrow \infty.$$
Since $\lambda_{\mathrm{GCV}}^{\mathrm{sub}}(b) \rightarrow 0$ and $(\frac{n}{b})^{-\frac{r}{rp+1}}<1$, we have 
\begin{equation*}
    \lambda_{\mathrm{KCas}}^{\mathrm{full}}(n;b) 
     = \lambda_{\mathrm{GCV}}^{\mathrm{sub}}(b)
     (\frac{n}{b})^{-\frac{r}{rp+1}} \rightarrow 0.
\end{equation*}
Also, since $rp>1$, we have
\begin{align}
n (\lambda_{\mathrm{KCas}}^{\mathrm{full}}(n;b) )^{\frac{1}{r}} 
&=  n (\lambda_{\mathrm{GCV}}^{\mathrm{sub}}(b))^{\frac{1}{r}}
     (\frac{n}{b})^{-\frac{1}{rp+1}} \nonumber\\
&= n^{\frac{rp}{rp+1}}  b^{\frac{1}{rp+1}} (\lambda_{\mathrm{GCV}}^{\mathrm{sub}}(b))^{\frac{1}{r}} \nonumber\\
&\geq  b^{\frac{rp}{rp+1}}  b^{\frac{1}{rp+1}} (\lambda_{\mathrm{GCV}}^{\mathrm{sub}}(b))^{\frac{1}{r}} \nonumber\\
& = b (\lambda_{\mathrm{GCV}}^{\mathrm{sub}}(b))^{\frac{1}{r}} \label{eq: thm1proof}
\rightarrow \infty.
\end{align}
Therefore, $n (\lambda_{\mathrm{KCas}}^{\mathrm{full}}(n;b) )^{\frac{1}{r}} \rightarrow \infty$. According to Chapter 9 of 
\cite{gu2013smoothing}, we have
\begin{equation*} 
    (V+ \lambda_{\mathrm{KCas}}^{\mathrm{full}}(n;b)  J)\left(\hat{\eta}-\eta_0\right)=O_{p}\left(n^{-1}  \lambda_{\mathrm{KCas}}^{\mathrm{full}}(n;b) ^{-\frac{1}{r}}+ \lambda_{\mathrm{KCas}}^{\mathrm{full}}(n;b) ^{p}\right).
\end{equation*} 
\end{proof}

\begin{proof}[Proof of Theorem \ref{thm:converge}]
Analogous to Theorem \ref{thm:converge_den}, it suffices to show that as $n \rightarrow \infty$, $$\lambda_{\mathrm{KCas}}^{\mathrm{full}}(n;b)  \rightarrow 0, \text{ and }$$
$$n (\lambda_{\mathrm{KCas}}^{\mathrm{full}}(n;b) )^{\frac{2}{r}} \rightarrow \infty.$$
Since $\lambda_{\mathrm{GCV}}^{\mathrm{sub}}(b) \rightarrow 0$ and $(\frac{n}{b})^{-\frac{r}{rp+1}}<1$, we have 
\begin{equation*}
    \lambda_{\mathrm{KCas}}^{\mathrm{full}}(n;b) 
     = \lambda_{\mathrm{GCV}}^{\mathrm{sub}}(b)
     (\frac{n}{b})^{-\frac{r}{rp+1}} \rightarrow 0.
\end{equation*}
Also, since $rp>1$, analogous to Equation (\ref{eq: thm1proof}), we have
\begin{equation*}
\begin{aligned}
n (\lambda_{\mathrm{KCas}}^{\mathrm{full}}(n;b) )^{\frac{2}{r}}
& \geq b (\lambda_{\mathrm{GCV}}^{\mathrm{sub}}(b))^{\frac{2}{r}}
\rightarrow \infty.
\end{aligned}
\end{equation*}
Therefore, $n (\lambda_{\mathrm{KCas}}^{\mathrm{full}}(n;b) )^{\frac{2}{r}} \rightarrow \infty$. According to Chapter 9 of 
\cite{gu2013smoothing}, we have
\begin{equation*} 
    (V+ \lambda_{\mathrm{KCas}}^{\mathrm{full}}(n;b)  J)\left(\hat{\eta}-\eta_0\right)=O_{p}\left(n^{-1}  \lambda_{\mathrm{KCas}}^{\mathrm{full}}(n;b) ^{-\frac{1}{r}}+ \lambda_{\mathrm{KCas}}^{\mathrm{full}}(n;b) ^{p}\right).
\end{equation*} 
\end{proof}
Note that it has been proved rigorously that the optimal smoothing parameter $\lambda(b)$ has the form $C b^{-\frac{r}{rp+1}}$ 
under some exponential regression problems, such as regression with Gaussian-type responses and periodic splines 
\citep{wahba1977practical, wahba1985comparison,craven1978smooth}. In such cases, with the fact that $rp>1$, as $b \rightarrow \infty$,
\begin{equation*}
\begin{aligned}
    \lambda(b)&= C b^{-\frac{r}{rp+1}} \rightarrow 0, \text{ and} \\
    b \lambda(b)^{\frac{2}{r}} &= b C^{\frac{2}{r}} b^{-\frac{2}{rp+1}} \\
    &= C^{\frac{2}{r}}b^{\frac{rp-1}{rp+1}} \rightarrow \infty,
\end{aligned}
\end{equation*}
that is, $\lambda(b) \rightarrow 0$ and $b (\lambda(b))^{\frac{2}{r}} \rightarrow \infty$ is naturally satisfied. In some cases, such as the density estimation problems and more general exponential-family settings, we impose the stated assumptions on $\lambda_{\mathrm{GCV}}^{\mathrm{sub}}(b)$; the numerical results support their validity.

We replace $\lambda(b)$ with $\lambda_{\mathrm{GCV}}^{\mathrm{sub}}(b)$ chosen by GCV since it is infeasible to determine $\lambda(b)$ with the unknown function $\eta_0$. Theoretical results \citep{li1986asymptotic,craven1978smooth} have shown that $\lambda_{\mathrm{GCV}}^{\mathrm{sub}}(b)$ is a good estimator of $\lambda(b)$, with $\frac{L(\lambda_{\mathrm{GCV}}^{\mathrm{sub}}(b))}{L(\lambda(b))}=1+o_p(1)$.
Thus, it is natural to extend the assumption $\lambda_{\mathrm{GCV}}^{\mathrm{sub}}(b) \rightarrow 0$ and $b (\lambda_{\mathrm{GCV}}^{\mathrm{sub}}(b))^{\frac{2}{r}} \rightarrow \infty$ to the general regression problems with responses from exponential families.
The numerical results also support this assumption.




\section{Simulation Details}

\subsection{Nonparametric regression}\label{Appx:sim_gss}

\textbf{Scenario 1:} Let 
\begin{equation*}
\begin{aligned}
\eta_{m 1}(x) =
  \sum_{i=1}^3 g_1\left(x_{\langle i \rangle}\right) + g_2\left(x_{\langle 1 \rangle}, x_{\langle 2 \rangle}\right) + g_2\left(x_{\langle 1 \rangle}, x_{\langle 3 \rangle}\right) + g_3\left(x_{\langle 1 \rangle}, x_{\langle 2 \rangle}, x_{\langle 3 \rangle}\right),
\end{aligned}
\end{equation*}

\noindent\textbf{Scenario 2:} Let 
\begin{equation*}
\begin{aligned}
\eta_{m 2}(x) &=
  \sum_{i=1}^3 i g_1\left(x_{\langle i \rangle}\right) + \sum_{i=4}^6 i g_{5}\left(x_{\langle i \rangle}\right) + \sum_{i=7}^9 g_{4}\left(x_{\langle i \rangle}\right) + \\ & \sum_{i=1}^3 \sum_{j>i}^4 3i g_2\left(x_{\langle i \rangle}, x_{\langle j \rangle}\right) + 6 g_2\left(x_{\langle 5 \rangle}, x_{\langle 6 \rangle}\right) + 8 g_6\left(x_{\langle 7 \rangle}, x_{\langle 8 \rangle}\right) + 10 g_3\left(x_{\langle 1 \rangle}, x_{\langle 2 \rangle}, x_{\langle 3 \rangle}\right),
\end{aligned}
\end{equation*}
where \\ 
$g_1(x)=10^6 x^{11}(1-x)^6$;\\
$g_2\left(x_{\langle 1 \rangle}, x_{\langle 2 \rangle}\right)=\exp \left(3 x_{\langle 1 \rangle} x_{\langle 2 \rangle}\right)$; \\ 
$g_3\left(x_{\langle 1 \rangle}, x_{\langle 2 \rangle}, x_{\langle 3 \rangle}\right)=$ $\frac{15 \sin \left(2 \pi x_{\langle 1 \rangle}\right)}{2-\sin \left(2 \pi x_{\langle 2 \rangle} x_{\langle 3 \rangle}\right)}$; \\
$g_4(x)=10^4 x^{3}(1-x)^{10}$;\\ 
$g_5(x)=15x sin(15x)$; \\ 
$g6(x) =
 \frac{a p_1}{\pi \sigma_{1} \sigma_{2}} \exp \left\{-\frac{\left(x_{\langle 1\rangle}-0.2\right)^{2}}{\sigma_{1}^{2}}-\frac{\left(x_{\langle 2\rangle}-0.3\right)^{2}}{\sigma_{2}^{2}}\right\} 
+\frac{a p_2}{\pi \sigma_{1} \sigma_{2}} \exp \left\{-\frac{\left(x_{\langle 1\rangle}-0.7\right)^{2}}{\sigma_{1}^{2}}-\frac{\left(x_{\langle 2\rangle}-0.8\right)^{2}}{\sigma_{2}^{2}}\right\}-b,$ with $\sigma_{1}=0.3$, $\sigma_{2}=0.4$, $p_1=0.625$, $p_2=0.375$, and $a=b=4.2$. \\

\section{Datasets}\label{data}
\subsection{Datasets for Density Estimation}

\begin{itemize}
    \item \textit{CD14}: Transcriptions in CD14 single cells. The data contains the abundance information of 13 proteins in 2,096 cells. The data set is available from \cite{stoeckius2017large}.
    \item \textit{AReM}: 
    Activity Recognition system based on Multisensor data fusion Data Set. The dimension is 6 and the sample size is 42,240. 
    The time-domain features including 3 mean values and 3 standard deviations were collected from the multisensor system during a period of time. The data set is available at UCI Machine Learning Repository \citep{lichman2013uci}.
    \item \textit{ESC}: Embryonic Stem Cell from Mouse \citep{ouyang2009chip}. The data concerns mouse embryonic stem cell gene expression and transcription factor association strength.
    The 4 features that describe the scores of TFAS with KLF4, NANOG, OCT4, and SOX2 of 1,027 genes are used for density estimation.
    The data set is available at CRAN in \textit{gss} package.
    \item \textit{MFCC}: Anuran Calls (MFCCs) Data Set. The data is extracted from syllables of anuran (frogs) calls, including 22 variables with a sample size of 7,195. The data set is available at UCI Machine Learning Repository \citep{lichman2013uci}. 
\end{itemize}


All the continuous variables in these datasets are scaled through a min-max normalization.

\subsection{Datasets for Nonparametric Regression}

\begin{itemize}
    \item \textit{SUSY}: Supersymmetric Dataset \citep{baldi2014searching}.
    The dataset contains one response and 18 kinematic features $x_{\langle 1\rangle},\dots,x_{\langle 18\rangle}$.
    The full sample size is 5,000,000, and
    about $54.24\%$ of the responses in
    the data are from the background process. 
    We consider the full sample GCV as the gold standard, but it is not affordable to compute GCV on the full sample. Therefore, we randomly select a subsample of size $20,000$ and consider this sample to be the ``full sample'' to compute the GCV, and only use this part of data to conduct all the following analysis in the paper.
    The data set is available at UCI Machine Learning Repository \citep{lichman2013uci}.
    
    \item \textit{WFRN:} Wall-Following Robot Navigation Dataset \citep{freire2009short}. The data is a robot navigating through the room following the wall using 24 ultrasound sensors with a sample size of 19,735. The data set is available at UCI Machine Learning Repository \citep{lichman2013uci}.


    \item \textit{OCUP}: Room Occupancy Estimation Dataset \citep{singh2018machine}. The experimental testbed for occupancy estimation was deployed in a room. The setup consisted of 7 sensor nodes and one edge node in a star configuration with the sensor nodes transmitting data to the edge every 30s using wireless transceivers. The dataset contains one response and 15 features. The full sample size is 10,129.
    The data set is available at UCI Machine Learning Repository \citep{lichman2013uci}.

    \item \textit{SHILL}: 
    Shill Bidding Dataset \citep{alzahrani2018scraping}. This is a dataset with a large number of eBay auctions of a popular product. The dataset contains one response and 12 features. The full sample size is 6,321. The data set is available at UCI Machine Learning Repository \citep{lichman2013uci}.

    \item \textit{CIFAR-10}:
     The CIFAR-10 dataset \citep{krizhevsky2009learning} consists of a training set of $50,000$ examples and a test set of $10,000$ examples. Each example in the dataset is a $32\times32$ color image, spanning 10 different classes of objects such as animals and vehicles. These classes include airplanes, cars, birds, cats, deer, dogs, frogs, horses, ships, and trucks, each equally represented in the dataset.
     
     For this dataset, we use a pre-trained convolutional neural network to extract the features from the raw images first. This neural network model consists of two main components: a convolutional layer block and a fully-connected layer block. The convolutional block comprises two sets of convolutional layers with batch normalization and max pooling layers for feature extraction from input images. The fully-connected block contains four linear layers with ReLU activations. The third linear layer with 20 nodes serves as a feature extraction layer, providing a compressed representation of the input features for downstream tasks.

     For this experiment, we convert CIFAR-10 into a binary logistic regression task by setting $Y=1$ for the car class and $Y=0$ for all other classes. The MSE is computed on the test set between the binary responses and the fitted conditional probabilities, and the relative MSE is normalized by the MSE of the full-sample estimator.
    
\end{itemize}

All the continuous predictors in these datasets are scaled through a min-max normalization. 
For all the datasets, we apply the cosine diagnostics \citep{gu2013smoothing} first for the identifiability
and the practical significance of the fitted terms, in order to avoid overfitting and overinterpreting.





\bibliography{ref,supp}

@article{meng2020more,
  title={More efficient approximation of smoothing splines via space-filling basis selection},
  author={Meng, Cheng and Zhang, Xinlian and Zhang, Jingyi and Zhong, Wenxuan and Ma, Ping},
  journal={Biometrika},
  volume={107},
  number={3},
  pages={723--735},
  year={2020},
  publisher={Oxford University Press}
}

@article{hoffmann2022training,
  title={Training compute-optimal Large Language Models},
  author={Hoffmann, Jordan and Borgeaud, Sebastian and Mensch, Arthur and Buchatskaya, Elena and Cai, Trevor and Rutherford, Eliza and Casas, DDL and Hendricks, Lisa Anne and Welbl, Johannes and Clark, Aidan and others},
  journal={arXiv preprint arXiv:2203.15556},
  volume={10},
  year={2022}
}

@article{grattafiori2024llama,
  title={The Llama 3 Herd of Models},
  author={Grattafiori, Aaron and Dubey, Abhimanyu and Jauhri, Abhinav and others},
  journal={arXiv preprint arXiv:2407.21783},
  year={2024}
}

@article{openai2023gpt4,
  title={{GPT}-4 Technical Report},
  author={Achiam, Josh and Adler, Steven and Agarwal, Sandhini and Ahmad, Lama and Akkaya, Ilge and Aleman, Florencia Leoni and Almeida, Diogo and Altenschmidt, Janko and Altman, Sam and Anadkat, Shyamal and others},
  journal = {arXiv preprint arXiv:2303.08774},
  year    = {2023},
  doi     = {10.48550/arXiv.2303.08774},
  url     = {https://arxiv.org/abs/2303.08774}
}

@article{fang2025spot,
  title={Spot: an active learning algorithm for efficient deep neural network training},
  author={Fang, Luyang and Meng, Cheng and Zhao, Lin and Wang, Tao and Liu, Tianming and Zhong, Wenxuan and Ma, Ping},
  journal={Big Data Mining and Analytics},
  volume={8},
  number={5},
  pages={1060--1074},
  year={2025},
  publisher={TUP}
}

@article{fang2026knowledge,
  title={Knowledge distillation and dataset distillation of large language models: Emerging trends, challenges, and future directions},
  author={Fang, Luyang and Yu, Xiaowei and Cai, Jiazhang and Chen, Yongkai and Wu, Shushan and Liu, Zhengliang and Yang, Zhenyuan and Lu, Haoran and Gong, Xilin and Liu, Yufang and others},
  journal={Artificial Intelligence Review},
  volume={59},
  number={1},
  pages={17},
  year={2026},
  publisher={Springer}
}

@article{gou2021knowledge,
  title={Knowledge distillation: A survey},
  author={Gou, Jianping and Yu, Baosheng and Maybank, Stephen J and Tao, Dacheng},
  journal={International Journal of Computer Vision},
  volume={129},
  number={6},
  pages={1789--1819},
  year={2021},
  publisher={Springer}
}

@inproceedings{zhang2019your,
  title={Be your own teacher: Improve the performance of convolutional neural networks via self distillation},
  author={Zhang, Linfeng and Song, Jiebo and Gao, Anni and Chen, Jingwei and Bao, Chenglong and Ma, Kaisheng},
  booktitle={Proceedings of the IEEE/CVF international conference on computer vision},
  pages={3713--3722},
  year={2019}
}

@inproceedings{fang2024bayesian,
  title={Bayesian knowledge distillation: a Bayesian perspective of distillation with uncertainty quantification},
  author={Fang, Luyang and Chen, Yongkai and Zhong, Wenxuan and Ma, Ping},
  booktitle={Forty-first International Conference on Machine Learning},
  year={2024}
}

@article{ma2026generalizable,
  title={A generalizable pathology foundation model using a unified knowledge distillation pretraining framework},
  author={Ma, Jiabo and Guo, Zhengrui and Zhou, Fengtao and Wang, Yihui and Xu, Yingxue and Li, Jinbang and Yan, Fang and Cai, Yu and Zhu, Zhengjie and Jin, Cheng and others},
  journal={Nature Biomedical Engineering},
  volume={10},
  number={3},
  pages={545--564},
  year={2026},
  publisher={Nature Publishing Group UK London}
}

@inproceedings{dehghani2023scaling,
  title={Scaling vision transformers to 22 billion parameters},
  author={Dehghani, Mostafa and Djolonga, Josip and Mustafa, Basil and Padlewski, Piotr and Heek, Jonathan and Gilmer, Justin and Steiner, Andreas Peter and Caron, Mathilde and Geirhos, Robert and Alabdulmohsin, Ibrahim and others},
  booktitle={International Conference on Machine Learning},
  pages={7480--7512},
  year={2023},
  organization={PMLR}
}

@article{goyal2017accurate,
  title={Accurate, large minibatch sgd: Training imagenet in 1 hour},
  author={Goyal, Priya and Doll{\'a}r, Piotr and Girshick, Ross and Noordhuis, Pieter and Wesolowski, Lukasz and Kyrola, Aapo and Tulloch, Andrew and Jia, Yangqing and He, Kaiming},
  journal={arXiv preprint arXiv:1706.02677},
  year={2017}
}

@inproceedings{xie2020self,
  title={Self-training with noisy student improves imagenet classification},
  author={Xie, Qizhe and Luong, Minh-Thang and Hovy, Eduard and Le, Quoc V},
  booktitle={Proceedings of the IEEE/CVF Conference on Computer Vision and Pattern Recognition},
  pages={10687--10698},
  year={2020}
}

@article{gu2013nonparametric,
  title={Nonparametric density estimation in high-dimensions},
  author={Gu, Chong and Jeon, Yongho and Lin, Yi},
  journal={Statistica Sinica},
  pages={1131--1153},
  year={2013},
  publisher={JSTOR}
}

@article{gu1993smoothing,
  title={Smoothing spline density estimation: Theory},
  author={Gu, Chong and Qiu, Chunfu},
  journal={The Annals of Statistics},
  volume={21},
  number={1},
  pages={217--234},
  year={1993},
  publisher={Institute of Mathematical Statistics}
}

@article{gu2003penalized,
  title={Penalized likelihood density estimation: Direct cross-validation and scalable approximation},
  author={Gu, Chong and Wang, Jingyuan},
  journal={Statistica Sinica},
  pages={811--826},
  year={2003},
  publisher={JSTOR}
}

@article{sun2021asymptotic,
  title={An asymptotic and empirical smoothing parameters selection method for smoothing spline ANOVA models in large samples},
  author={Sun, Xiaoxiao and Zhong, Wenxuan and Ma, Ping},
  journal={Biometrika},
  volume={108},
  number={1},
  pages={149--166},
  year={2021},
  publisher={Oxford University Press}
}

@article{wood2004stable,
  title={Stable and efficient multiple smoothing parameter estimation for generalized additive models},
  author={Wood, Simon N},
  journal={Journal of the American Statistical Association},
  volume={99},
  number={467},
  pages={673--686},
  year={2004},
  publisher={Taylor \& Francis}
}

@book{silverman2018density,
  title={Density Estimation for Statistics and Data Analysis},
  author={Silverman, Bernard W},
  year={2018},
  publisher={Routledge}
}

@article{chen2016comprehensive,
  title={A comprehensive approach to mode clustering},
  author={Chen, Yen-Chi and Genovese, Christopher R and Wasserman, Larry},
  journal={Electronic Journal of Statistics},
  volume={10},
  number={1},
  pages={210--241},
  year={2016},
  publisher={Institute of Mathematical Statistics and Bernoulli Society}
}

@article{perez2009bayesian,
  title={Bayesian classifiers based on kernel density estimation: Flexible classifiers},
  author={P{\'e}rez, Aritz and Larra{\~n}aga, Pedro and Inza, I{\~n}aki},
  journal={International Journal of Approximate Reasoning},
  volume={50},
  number={2},
  pages={341--362},
  year={2009},
  publisher={Elsevier}
}

@article{hinton2012deep,
  title={Deep neural networks for acoustic modeling in speech recognition: The shared views of four research groups},
  author={Hinton, Geoffrey and Deng, Li and Yu, Dong and Dahl, George E and Mohamed, Abdel-rahman and Jaitly, Navdeep and Senior, Andrew and Vanhoucke, Vincent and Nguyen, Patrick and Sainath, Tara N and others},
  journal={IEEE Signal Processing Magazine},
  volume={29},
  number={6},
  pages={82--97},
  year={2012},
  publisher={IEEE}
}

@inproceedings{wolf2020transformers,
  title={Transformers: State-of-the-art natural language processing},
  author={Wolf, Thomas and Debut, Lysandre and Sanh, Victor and Chaumond, Julien and Delangue, Clement and Moi, Anthony and Cistac, Pierric and Rault, Tim and Louf, R{\'e}mi and Funtowicz, Morgan and others},
  booktitle={Proceedings of the 2020 conference on empirical methods in natural language processing: system demonstrations},
  pages={38--45},
  year={2020}
}

@inproceedings{he2016deep,
  title={Deep residual learning for image recognition},
  author={He, Kaiming and Zhang, Xiangyu and Ren, Shaoqing and Sun, Jian},
  booktitle={Proceedings of the IEEE Conference on Computer Vision and Pattern Recognition},
  pages={770--778},
  year={2016}
}

@article{hinton2015distilling,
  title={Distilling the knowledge in a neural network},
  author={Hinton, Geoffrey and Vinyals, Oriol and Dean, Jeff and others},
  journal={arXiv preprint arXiv:1503.02531},
  volume={2},
  number={7},
  year={2015}
}

@inproceedings{yuan2020revisiting,
  title={Revisiting knowledge distillation via label smoothing regularization},
  author={Yuan, Li and Tay, Francis EH and Li, Guilin and Wang, Tao and Feng, Jiashi},
  booktitle={Proceedings of the IEEE/CVF Conference on Computer Vision and Pattern Recognition},
  pages={3903--3911},
  year={2020}
}

@inproceedings{lan2018self,
  title={Self-referenced deep learning},
  author={Lan, Xu and Zhu, Xiatian and Gong, Shaogang},
  booktitle={Asian Conference on Computer Vision},
  pages={284--300},
  year={2018},
  organization={Springer}
}

@inproceedings{phuong2019distillation,
  title={Distillation-based training for multi-exit architectures},
  author={Phuong, Mary and Lampert, Christoph H},
  booktitle={Proceedings of the IEEE/CVF International Conference on Computer Vision},
  pages={1355--1364},
  year={2019}
}

@inproceedings{yang2019snapshot,
  title={Snapshot distillation: Teacher-student optimization in one generation},
  author={Yang, Chenglin and Xie, Lingxi and Su, Chi and Yuille, Alan L},
  booktitle={Proceedings of the IEEE/CVF Conference on Computer Vision and Pattern Recognition},
  pages={2859--2868},
  year={2019}
}

@inproceedings{hou2019learning,
  title={Learning lightweight lane detection {CNNs} by self attention distillation},
  author={Hou, Yuenan and Ma, Zheng and Liu, Chunxiao and Loy, Chen Change},
  booktitle={Proceedings of the IEEE/CVF international conference on computer vision},
  pages={1013--1021},
  year={2019}
}

@article{zhang2020self,
  title={Self-distillation as instance-specific label smoothing},
  author={Zhang, Zhilu and Sabuncu, Mert},
  journal={Advances in Neural Information Processing Systems},
  volume={33},
  pages={2184--2195},
  year={2020}
}

@article{mobahi2020self,
  title={Self-distillation amplifies regularization in {Hilbert} space},
  author={Mobahi, Hossein and Farajtabar, Mehrdad and Bartlett, Peter},
  journal={Advances in Neural Information Processing Systems},
  volume={33},
  pages={3351--3361},
  year={2020}
}

@article{kim2004smoothing,
  title={Smoothing spline Gaussian regression: more scalable computation via efficient approximation},
  author={Kim, Young-Ju and Gu, Chong},
  journal={Journal of the Royal Statistical Society: Series B (Statistical Methodology)},
  volume={66},
  number={2},
  pages={337--356},
  year={2004},
  publisher={Wiley Online Library}
}

@article{huang1998projection,
  title={Projection estimation in multiple regression with application to functional ANOVA models},
  author={Huang, Jianhua Z},
  journal={The Annals of Statistics},
  volume={26},
  number={1},
  pages={242--272},
  year={1998},
  publisher={Institute of Mathematical Statistics}
}

@article{jeon2006effective,
  title={An effective method for high-dimensional log-density ANOVA estimation, with application to nonparametric graphical model building},
  author={Jeon, Yongho and Lin, Yi},
  journal={Statistica Sinica},
  pages={353--374},
  year={2006},
  publisher={JSTOR}
}

@article{lin2006component,
  title={Component selection and smoothing in multivariate nonparametric regression},
  author={Lin, Yi and Zhang, Hao Helen},
  journal={The Annals of Statistics},
  volume={34},
  number={5},
  pages={2272--2297},
  year={2006},
  publisher={Institute of Mathematical Statistics}
}

@book{bosq2012nonparametric,
  title={Nonparametric Statistics for Stochastic Processes: Estimation and Prediction},
  author={Bosq, Denis},
  volume={110},
  year={2012},
  publisher={Springer Science \& Business Media}
}

@book{wahba1990spline,
  title={Spline Models for Observational Data},
  author={Wahba, Grace},
  year={1990},
  publisher={SIAM}
}

@article{sun2016statistical,
  title={Statistical inference for time course RNA-Seq data using a negative binomial mixed-effect model},
  author={Sun, Xiaoxiao and Dalpiaz, David and Wu, Di and S Liu, Jun and Zhong, Wenxuan and Ma, Ping},
  journal={BMC Bioinformatics},
  volume={17},
  number={1},
  pages={1--13},
  year={2016},
  publisher={Springer}
}

@article{helwig2016smoothing,
  title={Smoothing spline analysis of variance models: A new tool for the analysis of cyclic biomechanical data},
  author={Helwig, Nathaniel E and Shorter, K Alex and Ma, Ping and Hsiao-Wecksler, Elizabeth T},
  journal={Journal of Biomechanics},
  volume={49},
  number={14},
  pages={3216--3222},
  year={2016},
  publisher={Elsevier}
}

@article{kimeldorf1971some,
  title={Some results on Tchebycheffian spline functions},
  author={Kimeldorf, George and Wahba, Grace},
  journal={Journal of Mathematical Analysis and Applications},
  volume={33},
  number={1},
  pages={82--95},
  year={1971},
  publisher={Elsevier}
}

@book{wang2011smoothing,
  title={Smoothing Splines: Methods and Applications},
  author={Wang, Yuedong},
  year={2011},
  publisher={CRC press}
}

@book{golub2013matrix,
  title={Matrix Computations},
  author={Golub, Gene H and Van Loan, Charles F},
  year={2013},
  publisher={JHU press}
}

@article{gu1992cross,
  title={Cross-validating non-Gaussian data},
  author={Gu, Chong},
  journal={Journal of Computational and Graphical Statistics},
  volume={1},
  number={2},
  pages={169--179},
  year={1992},
  publisher={Taylor \& Francis}
}

@article{gu1991minimizing,
  title={Minimizing GCV/GML scores with multiple smoothing parameters via the Newton method},
  author={Gu, Chong and Wahba, Grace},
  journal={SIAM Journal on Scientific and Statistical Computing},
  volume={12},
  number={2},
  pages={383--398},
  year={1991},
  publisher={SIAM}
}

@article{hall1990using,
  title={Using the bootstrap to estimate mean squared error and select smoothing parameter in nonparametric problems},
  author={Hall, Peter},
  journal={Journal of Multivariate Analysis},
  volume={32},
  number={2},
  pages={177--203},
  year={1990},
  publisher={Elsevier}
}

@article{wahba1977practical,
  title={Practical approximate solutions to linear operator equations when the data are noisy},
  author={Wahba, Grace},
  journal={SIAM Journal on Numerical Analysis},
  volume={14},
  number={4},
  pages={651--667},
  year={1977},
  publisher={SIAM}
}

@article{xiang1996generalized,
  title={A generalized approximate cross validation for smoothing splines with non-Gaussian data},
  author={Xiang, Dong and Wahba, Grace},
  journal={Statistica Sinica},
  pages={675--692},
  year={1996},
  publisher={JSTOR}
}

@article{wang2018optimal,
  title={Optimal subsampling for large sample logistic regression},
  author={Wang, HaiYing and Zhu, Rong and Ma, Ping},
  journal={Journal of the American Statistical Association},
  volume={113},
  number={522},
  pages={829--844},
  year={2018},
  publisher={Taylor \& Francis}
}

@article{lin2000tensor,
  title={Tensor product space ANOVA models},
  author={Lin, Yi},
  journal={The Annals of Statistics},
  volume={28},
  number={3},
  pages={734--755},
  year={2000},
  publisher={Institute of Mathematical Statistics}
}

@article{daszykowski2002representative,
  title={Representative subset selection},
  author={Daszykowski, Michal and Walczak, Beata and Massart, DL},
  journal={Analytica chimica acta},
  volume={468},
  number={1},
  pages={91--103},
  year={2002},
  publisher={Elsevier}
}

@article{papamakarios2017masked,
  title={Masked autoregressive flow for density estimation},
  author={Papamakarios, George and Pavlakou, Theo and Murray, Iain},
  journal={Advances in Neural Information Processing Systems},
  volume={30},
  year={2017}
}

@article{gao2022adaptive,
  title={Adaptive manifold density estimation},
  author={Gao, Jia-Xing and Jiang, Da-Quan and Qian, Min-Ping},
  journal={Journal of Statistical Computation and Simulation},
  pages={1--15},
  year={2022},
  publisher={Taylor \& Francis}
}

@article{gu2004model,
  title={Model diagnostics for smoothing spline ANOVA models},
  author={Gu, Chong},
  journal={Canadian Journal of Statistics},
  volume={32},
  number={4},
  pages={347--358},
  year={2004},
  publisher={Wiley Online Library}
}

@article{gu2014smoothing,
  title={Smoothing spline ANOVA models: R package gss},
  author={Gu, Chong},
  journal={Journal of Statistical Software},
  volume={58},
  pages={1--25},
  year={2014}
}

@article{gu2001cross,
  title={Cross-validating non-Gaussian data: generalized approximate cross-validation revisited},
  author={Gu, Chong and Xiang, Dong},
  journal={Journal of Computational and Graphical Statistics},
  volume={10},
  number={3},
  pages={581--591},
  year={2001},
  publisher={Taylor \& Francis}
}

@article{strubell2019energy,
  title={Energy and policy considerations for deep learning in NLP},
  author={Strubell, Emma and Ganesh, Ananya and McCallum, Andrew},
  journal={arXiv preprint arXiv:1906.02243},
  year={2019}
}

@book{weinberger1974variational,
  title={Variational Methods for Eigenvalue Approximation},
  author={Weinberger, Hans F},
  year={1974},
  publisher={SIAM}
}

@article{silverman1982est,
  title={On the estimation of a probability density function by the maximum penalized likelihood method},
  author={Silverman, Bernard W},
  journal={The Annals of Statistics},
  pages={795--810},
  year={1982},
  publisher={JSTOR}
}

@article{li1986asymptotic,
  title={Asymptotic optimality of CL and generalized cross-validation in ridge regression with application to spline smoothing},
  author={Li, Ker-Chau},
  journal={The Annals of Statistics},
  pages={1101--1112},
  year={1986},
  publisher={JSTOR}
}

@book{gu2013smoothing,
  title={Smoothing Spline ANOVA Models},
  author={Gu, Chong},
  volume={297},
  year={2013},
  publisher={Springer}
}

@article{craven1978smooth,
  title={Smoothing noisy data with spline functions},
  author={Craven, Peter and Wahba, Grace},
  journal={Numerische Mathematik},
  volume={31},
  number={4},
  pages={377--403},
  year={1978},
  publisher={Springer}
}

@article{nagler2016evading,
  title={Evading the curse of dimensionality in nonparametric density estimation with simplified vine copulas},
  author={Nagler, Thomas and Czado, Claudia},
  journal={Journal of Multivariate Analysis},
  volume={151},
  pages={69--89},
  year={2016},
  publisher={Elsevier}
}

@article{fang2026statistical,
  title   = {A Statistical Perspective on Knowledge Distillation: Foundations, Classical Methods, and LLM Extensions},
  author  = {Fang, Luyang and Lu, Haoran and Cai, Jiazhang and Wang, Tao and Cheng, Huimin and Zhong, Wenxuan and Ma, Ping},
  journal = {Annual Review of Statistics and Its Application},
  year    = {2026},
  note    = {Forthcoming}
}

@article{wahba1985comparison,
  title={A comparison of GCV and GML for choosing the smoothing parameter in the generalized spline smoothing problem},
  author={Wahba, Grace},
  journal={The Annals of Statistics},
  pages={1378--1402},
  year={1985},
  publisher={JSTOR}
}

@book{wand1994kernel,
  title={Kernel smoothing},
  author={Wand, Matt P and Jones, M Chris},
  year={1994},
  publisher={CRC press}
}

@article{bischl2023hyperparameter,
  title={Hyperparameter optimization: Foundations, algorithms, best practices, and open challenges},
  author={Bischl, Bernd and Binder, Martin and Lang, Michel and Pielok, Tobias and Richter, Jakob and Coors, Stefan and Thomas, Janek and Ullmann, Theresa and Becker, Marc and Boulesteix, Anne-Laure and others},
  journal={Wiley Interdisciplinary Reviews: Data Mining and Knowledge Discovery},
  volume={13},
  number={2},
  pages={e1484},
  year={2023},
  publisher={Wiley Online Library}
}

@inproceedings{sandler2018mobilenetv2,
  title={Mobilenetv2: Inverted residuals and linear bottlenecks},
  author={Sandler, Mark and Howard, Andrew and Zhu, Menglong and Zhmoginov, Andrey and Chen, Liang-Chieh},
  booktitle={Proceedings of the IEEE Conference on Computer Vision and Pattern Recognition},
  pages={4510--4520},
  year={2018}
}

@techreport{krizhevsky2009learning,
  title={Learning Multiple Layers of Features from Tiny Images},
  author={Krizhevsky, Alex},
  year={2009},
  institution={University of Toronto}
}

@inproceedings{kolossov2023towards,
  title={Towards a statistical theory of data selection under weak supervision},
  author={Kolossov, Germain and Montanari, Andrea and Tandon, Pulkit},
  booktitle={International Conference on Learning Representations},
  volume={2024},
  pages={41947--41985},
  year={2024}
}

@article{nakkiran2021deep,
  title={Deep double descent: Where bigger models and more data hurt},
  author={Nakkiran, Preetum and Kaplun, Gal and Bansal, Yamini and Yang, Tristan and Barak, Boaz and Sutskever, Ilya},
  journal={Journal of Statistical Mechanics: Theory and Experiment},
  volume={2021},
  number={12},
  pages={124003},
  year={2021},
  publisher={IOP Publishing}
}

@inproceedings{guo2022deepcore,
  title={Deepcore: A comprehensive library for coreset selection in deep learning},
  author={Guo, Chengcheng and Zhao, Bo and Bai, Yanbing},
  booktitle={International Conference on Database and Expert Systems Applications},
  pages={181--195},
  year={2022},
  organization={Springer}
}

@article{yang2022dataset,
  title={Dataset pruning: Reducing training data by examining generalization influence},
  author={Yang, Shuo and Xie, Zeke and Peng, Hanyu and Xu, Min and Sun, Mingming and Li, Ping},
  journal={arXiv preprint arXiv:2205.09329},
  year={2022}
}

@article{sorscher2022beyond,
  title={Beyond neural scaling laws: beating power law scaling via data pruning},
  author={Sorscher, Ben and Geirhos, Robert and Shekhar, Shashank and Ganguli, Surya and Morcos, Ari},
  journal={Advances in Neural Information Processing Systems},
  volume={35},
  pages={19523--19536},
  year={2022}
}

@article{gadre2024datacomp,
  title={Datacomp: In search of the next generation of multimodal datasets},
  author={Gadre, Samir Yitzhak and Ilharco, Gabriel and Fang, Alex and Hayase, Jonathan and Smyrnis, Georgios and Nguyen, Thao and Marten, Ryan and Wortsman, Mitchell and Ghosh, Dhruba and Zhang, Jieyu and others},
  journal={Advances in Neural Information Processing Systems},
  volume={36},
  year={2023}
}

@article{stone1974cross,
  title={Cross-validatory choice and assessment of statistical predictions},
  author={Stone, Mervyn},
  journal={Journal of the Royal Statistical Society: Series B (Methodological)},
  volume={36},
  number={2},
  pages={111--133},
  year={1974},
  publisher={Wiley Online Library}
}

@article{stone1978cross,
  title={Cross-validation: A review},
  author={Stone, Mervyn},
  journal={Statistics: A Journal of Theoretical and Applied Statistics},
  volume={9},
  number={1},
  pages={127--139},
  year={1978},
  publisher={Taylor \& Francis}
}

@incollection{akaike1998information,
  title={Information theory and an extension of the maximum likelihood principle},
  author={Akaike, Hirotogu},
  booktitle={Selected papers of Hirotugu Akaike},
  pages={199--213},
  year={1998},
  publisher={Springer}
}

@article{schwarz1978estimating,
  title={Estimating the dimension of a model},
  author={Schwarz, Gideon},
  journal={The Annals of Statistics},
  pages={461--464},
  year={1978},
  publisher={JSTOR}
}

@article{jones1996survey,
  title={A brief survey of bandwidth selection for density estimation},
  author={Jones, M Chris and Marron, James S and Sheather, Simon J},
  journal={Journal of the American Statistical Association},
  volume={91},
  number={433},
  pages={401--407},
  year={1996},
  publisher={Taylor \& Francis}
}

@article{sheather1991reliable,
  title={A reliable data-based bandwidth selection method for kernel density estimation},
  author={Sheather, Simon J and Jones, Michael C},
  journal={Journal of the Royal Statistical Society: Series B (Methodological)},
  volume={53},
  number={3},
  pages={683--690},
  year={1991},
  publisher={Wiley Online Library}
}

@article{botev2010kernel,
  title        = {Kernel Density Estimation via Diffusion},
  author       = {Botev, Zdravko I. and Grotowski, J. F. and Kroese, Dirk P.},
  journal      = {Annals of Statistics},
  year         = {2010},
  volume       = {38},
  number       = {5},
  pages        = {2916--2957}
}

@article{rudemo1982empirical,
  title        = {Empirical Choice of Histograms and Kernel Density Estimators},
  author       = {Rudemo, Mats},
  journal      = {Scandinavian Journal of Statistics},
  year         = {1982},
  volume       = {9},
  pages        = {65--78}
}

@article{bowman1984alternative,
  title        = {An Alternative Method of Cross-Validation for the Smoothing of Density Estimates},
  author       = {Bowman, Adrian W.},
  journal      = {Biometrika},
  year         = {1984},
  volume       = {71},
  number       = {2},
  pages        = {353--360}
}

@book{fan2018local,
  title={Local polynomial modeling and its applications: monographs on statistics and applied probability 66},
  author={Fan, Jianqing},
  year={2018},
  publisher={Routledge}
}

@book{ferraty2006nonparametric,
  title={Nonparametric functional data analysis: theory and practice},
  author={Ferraty, Fr{\'e}d{\'e}ric and Vieu, Philippe},
  year={2006},
  publisher={Springer}
}

@inproceedings{singh2018machine,
  title={Machine learning-based occupancy estimation using multivariate sensor nodes},
  author={Singh, Adarsh Pal and Jain, Vivek and Chaudhari, Sachin and Kraemer, Frank Alexander and Werner, Stefan and Garg, Vishal},
  booktitle={2018 IEEE Globecom Workshops (GC Wkshps)},
  pages={1--6},
  year={2018},
  organization={IEEE}
}

@misc{lichman2013uci,
author = "Dua, Dheeru and Graff, Casey",
year = "2017",
title = "{UCI} Machine Learning Repository",
url = "http://archive.ics.uci.edu/ml",
institution = "University of California, Irvine, School of Information and Computer Sciences" }

@article{baldi2014searching,
  title={Searching for exotic particles in high-energy physics with deep learning},
  author={Baldi, Pierre and Sadowski, Peter and Whiteson, Daniel},
  journal={Nature communications},
  volume={5},
  number={1},
  pages={1--9},
  year={2014},
  publisher={Nature Publishing Group}
}

@inproceedings{freire2009short,
  title={Short-term memory mechanisms in neural network learning of robot navigation tasks: A case study},
  author={Freire, Ananda L and Barreto, Guilherme A and Veloso, Marcus and Varela, Antonio T},
  booktitle={2009 6th Latin American Robotics Symposium (LARS 2009)},
  pages={1--6},
  year={2009},
  organization={IEEE}
}

@article{ouyang2009chip,
  title={ChIP-Seq of transcription factors predicts absolute and differential gene expression in embryonic stem cells},
  author={Ouyang, Zhengqing and Zhou, Qing and Wong, Wing Hung},
  journal={Proceedings of the National Academy of Sciences},
  volume={106},
  number={51},
  pages={21521--21526},
  year={2009},
  publisher={National Acad Sciences}
}

@article{stoeckius2017large,
  title={Large-scale simultaneous measurement of epitopes and transcriptomes in single cells},
  author={Stoeckius, Marlon and Hafemeister, Christoph and Stephenson, William and Houck-Loomis, Brian and Chattopadhyay, Pratip K and Swerdlow, Harold and Satija, Rahul and Smibert, Peter},
  journal={Nature methods},
  volume={14},
  number={9},
  pages={865},
  year={2017},
  publisher={NIH Public Access}
}

@article{alzahrani2018scraping,
  title={Scraping and preprocessing commercial auction data for fraud classification},
  author={Alzahrani, Ahmad and Sadaoui, Samira},
  journal={arXiv preprint arXiv:1806.00656},
  year={2018}
}

@article{ma2017adaptive,
  title={Adaptive basis selection for exponential family smoothing splines with application in joint modeling of multiple sequencing samples},
  author={Ma, Ping and Zhang, Nan and Huang, Jianhua Z and Zhong, Wenxuan},
  journal={Statistica Sinica},
  pages={1757--1777},
  year={2017},
  publisher={JSTOR}
}

@article{gu2002penalized,
  title={Penalized likelihood regression: General formulation and efficient approximation},
  author={Gu, Chong and Kim, Young-Ju},
  journal={Canadian Journal of Statistics},
  volume={30},
  number={4},
  pages={619--628},
  year={2002},
  publisher={Wiley Online Library}
}

@article{zhang2023optimal,
  title={An optimal transport approach for selecting a representative subsample with application in efficient kernel density estimation},
  author={Zhang, Jingyi and Meng, Cheng and Yu, Jun and Zhang, Mengrui and Zhong, Wenxuan and Ma, Ping},
  journal={Journal of Computational and Graphical Statistics},
  volume={32},
  number={1},
  pages={329--339},
  year={2023},
  publisher={Taylor \& Francis}
}

\end{document}